\documentclass[preprint,a4paper]{elsarticle}

\usepackage[utf8]{inputenc}
\usepackage{amsmath}
\usepackage{stmaryrd} %another package with the common cause symbol
\usepackage{amsfonts}
\usepackage{amssymb}
\usepackage{graphicx}
\usepackage{soul}

% FOR NATURE-LIKE REFERENCES 1.
%\usepackage[super, compress]{natbib}
%\renewcommand\bibnumfmt[1]{#1.}

% FOR NATURELIKE REFERENCES 2.
% \usepackage[backend=bibtex, style=PLAIN]{biblatex}
% \bibliography{mybibs}

\usepackage{hyperref}
\hypersetup{hidelinks}

\usepackage{color}
\usepackage{setspace}
% DO NOT USE FOR ELSEVIER
%\usepackage{authblk}
%\doublespacing

\usepackage[colorinlistoftodos]{todonotes}
\usepackage{enumerate}

%\usepackage{lineno}
%\linenumbers

\usepackage{float}

\newcommand{\hcc}{\mathrel{\scalebox{0.8}{$\curlyveeuparrow$}}}
%\renewcommand{\familydefault}{\sfdefault}
%\DeclareSymbolFont{letters}{OML}{ztmcm}{m}{it}

\newtheorem{theorem}{Theorem}
\newtheorem{definition}[theorem]{Definition}
%\addto\extrasenglish{%with Babel
  %
%}

\newtheorem{assumption}[theorem]{Assumption}

\newtheorem{lemma}[theorem]{Lemma}
\newtheorem{corollary}[theorem]{Corollary}
\newproof{proof}{Proof}

% Elsevier:
\author[1]{Zsigmond Benkő\fnref{fn1}}%affil 2 is not present any more
\author[1,3]{Ádám Zlatniczki\fnref{fn1}}
\author[1]{Marcell Stippinger\fnref{fn1}}
\author[4]{Dániel Fabó}
\author[4]{András Sólyom}
\author[5,6]{Loránd Erőss}
\author[1,3,7]{András Telcs\fnref{fn2}}
\author[1,8]{Zoltán Somogyvári\fnref{fn2}\corref{cor1}}

\fntext[fn1]{Equal contribution as first authors.}
\fntext[fn2]{Equal contribution as last authors.}
\cortext[cor1]{Corresponding author at: Department of Computational Sciences, Institute for Particle and Nuclear Physics, HUN-REN Wigner Research Centre for Physics, Budapest H-1121, Hungary. E-mail: \nolinkurl{somogyvari.zoltan@wigner.hun-ren.hu} (Z. Somogyvári).}
\affiliation[1]{organization={Department of Computational Sciences, Institute for Particle and Nuclear Physics, HUN-REN Wigner Research Centre for Physics},
addressline={Konkoly-Thege Miklós út 29-33.},
postcode={H-1121},
city={Budapest},
country={Hungary}}
%\affiliation[2]{organization={János Szentágothai Doctoral School of Neurosciences, Semmelweis University},
%addressline={Üllői út 26.},
%postcode={H-1085},
%city={Budapest},
%country={Hungary}}
\affiliation[3]{organization={Department of Computer Science and Information Theory, Faculty of Electrical Engineering and Informatics, Budapest University of Technology and Economics},
addressline={Magyar tudósok körútja 2.},
postcode={H-1111},
city={Budapest},
country={Hungary}}
\affiliation[4]{organization={Epilepsy Unit, Department of Neurology, National Institute of Mental Health, Neurology and Neurosurgery},
addressline={Amerikai út 57.},
postcode={H-1145},
city={Budapest},
country={Hungary}}
\affiliation[5]{organization={Department of Functional Neurosurgery, National Institute of Mental Health, Neurology and Neurosurgery},
addressline={Amerikai út 57.},
postcode={H-1145},
city={Budapest},
country={Hungary}}
\affiliation[6]{organization={Faculty of Information Technology and Bionics, Péter Pázmány Catholic University},
addressline={Práter u. 50/A},
postcode={H-1083},
city={Budapest},
country={Hungary}}
\affiliation[7]{organization={Department of Quantitative Methods, Faculty of Business and Economics, University of Pannonia},
addressline={Egyetem u. 10.},
postcode={H-8200},
city={Veszprém},
country={Hungary}}
\affiliation[8]{organization={Axoncord LLC},
addressline={Dunakeszi utca 23. Fsz. 1. ajtó},
postcode={H-1048},
city={Budapest},
country={Hungary}}

% Previous typesetting:
% \author[1,2]{Zsigmond Benkő}
% \author[1,3]{Ádám Zlatniczki}
% \author[1]{Marcell Stippinger}
% \author[4]{Dániel Fabó}
% \author[4]{András Sólyom}
% \author[5,6]{Loránd Erőss}
% \author[1,3,7]{András Telcs}
% \author[1,8,*]{Zoltán Somogyvári}

% \affil[1]{Department of Computational Sciences, Institute for Particle and Nuclear Physics, HUN-REN Wigner Research Centre for Physics, H-1121, Hungary}
% \affil[2]{János Szentágothai Doctoral School of Neurosciences, Semmelweis University, H-1085, Hungary}
% \affil[3]{Department of Computer Science and Information Theory, Faculty of Electrical Engineering and Informatics, Budapest University of Technology and Economics, Budapest, H-1111, Hungary}
% \affil[4]{Epilepsy Center ``Juhász Pál'', National Institute of Clinical Neurosciences, Budapest, H-1145, Hungary}
% \affil[5]{Department of Functional Neurosurgery, National Institute of Clinical Neurosciences, Budapest, H-1145, Hungary}
% \affil[6]{Faculty of Information Technology and Bionics, Péter Pázmány Catholic University, Budapest, H-1083, Hungary}
% \affil[7]{Department of Quantitative Methods, University of Pannonia, Faculty of Business and Economics, H-8200, Veszprém, Hungary}
% \affil[8]{Neuromicrosystems ltd., Budapest, H-1113, Hungary}
% \setcounter{Maxaffil}{0}
% \renewcommand\Affilfont{\small}

\begin{document}
\title{Complete Inference of Causal Relations between Dynamical Systems}
\date{January 16, 2024}

% Previous typesetting:
% \maketitle

\begin{abstract}

From ancient philosophers to modern economists, biologists, and other researchers, there has been a continuous effort to unveil causal relations. The most formidable challenge lies in deducing the nature of the causal relationship: whether it is unidirectional, bidirectional, or merely apparent — implied by an unobserved common cause.

While modern technology equips us with tools to collect data from intricate systems such as the planet's ecosystem or the human brain, comprehending their functioning requires the identification and differentiation of causal relationships among the components, all without external interventions.

In this context, we introduce a novel method capable of distinguishing and assigning probabilities to the presence of all potential causal relations between two or more time series within dynamical systems. The efficacy of this method is verified using synthetic datasets and applied to EEG (electroencephalographic) data recorded from epileptic patients.

Given the universal applicability of our method, it holds promise for diverse scientific fields.

\end{abstract}

\begin{keyword}
causal analysis \sep Bayesian inference \sep time series \sep Takens' theorem \sep topological embedding \sep intrinsic dimension \sep epileptic focus localisation \sep EEG analysis
\end{keyword}

% Elsevier typesetting
\maketitle

Causality stands as one, if not the most fundamental pillar of science. Yet, the identification of causal relations within deterministic dynamical systems, relying solely on observations without interventions, remains a formidable challenge. Despite numerous proposed methods, none has yielded an exact and comprehensive solution.

{\bf Predictive causality} constitutes one branch of standard causality analysis tools, originating from its foundational principle articulated by Norbert Wiener \cite{Wiener56}.
In this framework, one examines two time series in conjunction with a model that forecasts a given time series based on its recent history. If the incorporation of the recent past of another time series improves the prediction of the first, one asserts that, in the Wiener sense, the included time series is a cause of the first.
This principle, initially implemented by Granger \cite{Granger1969}, has undergone subsequent extensions to encompass nonlinear models and non-parametric methods, such as transfer entropy \cite{Schreiber2008, MartinPalus2008, Vicente2011transfer}.

The Granger method, rooted in the predictive causality principle and tailored for analyzing stochastic time series, has gained widespread popularity across various scientific domains. Simultaneously, there are several indications that, in the context of deterministic dynamical systems, it may yield inaccurate results~\cite{Lusch2016}.

Judea Pearl \cite{Pearl2000} introduced an axiomatic approach to infer causality in networks of random variables in terms of directed cyclic graphs. This method can infer the direction of connections but its applicability is limited to specific network structures and not suitable for scenarios involving only two variables.  

{\bf Topological causality} represents a different approach to causality analysis, focusing to interacting deterministic dynamical systems. These methods become widely known by the work of George Sugihara \cite{Sugihara2012}, further developed by Tajima \cite{tajima2015untangling} and refined by Harnack \cite{harnack2017topological} and many others. Sugihara's convergent cross mapping (CCM) technique is rooted in Takens' embedding theorem \cite{Takens1981}.
Takens' theorem states that the topology and all degrees of freedom within a dynamical system can be reconstructed from time series observations through time delay embedding. The CCM method's pivotal concept lies in the notion that the consequence encompasses an observation (in Takens' sense, a lower-dimensional smooth function of the state) of the cause \cite{Stark1999}. Thus, all degrees of freedom of the cause can be reconstructed from the consequence.

While Sugihara's CCM method is able to hint bidirectional causal relations, distinguishing a direct causal link from a hidden common cause remains a formidable challenge. Both Sugihara \cite{Sugihara2012} and Harnack \cite{harnack2017topological} acknowledge that observed correlation without inferred direct causation signifies a common driver. However, it is important to note that a common driver does not necessarily result in linear correlation between the two driven systems. Consequently, the CCM method can detect directed and bidirectional causal relationships but may struggle to infer all cases with hidden common cause included, as we demonstrated with three nonlinearly coupled logistic maps (see Extended Data \autoref{fig:CommonCauseTest}). Further insights into its applicability can be found in \citep{mccracken2014convergent,cobey2016limits,monster2017causal}.

Recurrence maps have been applied in causality detection too. Hirata et al.\ \cite{Hirata2010} inferred the presence of a common cause by rejecting both independence and direct dependence based on recurrence maps. However, the method falls short in providing quantitative detection of all types of causal relations, relying on classical hypothesis tests that offer only unidirectional implication. Consequently, while common cause can be detected by Hirata's method \cite{Hirata2010}, direct causality and independence is discerned only on the ``cannot be rejected'' branch of their tests.

The method we propose originates from Takens' topological theory of dynamical systems \cite{Takens1981} and draws insights from information theory and dimension theory \cite{Renyi1959, pincus1991approximate, Geiger2019}. There are a few earlier papers that utilize dimension as a measure of dependency, without full discovery of possible causal relations, e.g., \cite{romano2016,Quiroga2000}. In the following sections, we introduce our method and its applications to various types of time series. We showcase its novelty and uniqueness in its unified nature, demonstrating its capability to detect and assign probabilities to all types of causal relations, notably excelling in detecting hidden common causes in dynamical systems.

Our method provides complete detection and distinction of  all possible causal relations within two deterministic dynamical systems.\footnote{The earlier version of this paper was made available in the manuscript \cite{benko2020complete}. To our best knowledge the proposed method was the first exact one which is capable to detect all kind of causal relationships, including hidden common cause.}

\section*{Inferring causality from manifold dimensions}

Inferring causal relations through manifold dimensions is feasible and roots in Takens' theorem. When provided with a time series, we can perform time delay embedding and measure/estimate the dimensionality of the reconstructed dynamics.
Takens' theorem guarantees that the dimension of the embedded manifold aligns with the dimension of the attractor in the original state-space, remaining invariant against the observation function and dimension of the embedding space (provided it is sufficiently high dimensional).

Various dimensionality notions have been introduced to capture essential characteristics of attractors in dynamical systems. These explorations have led to the adoption of diverse fractal-type dimensionality notations, including Rényi information dimension, correlation dimension, and various intrinsic dimensions \cite{Renyi1959, mandelbrot1979fractals, Grassberger1983, Levina2005, Szepesvari2007}.

As an illustrative example of the potential use of manifold dimensions for causal inference, consider two simulated, unidirectionally coupled dynamical systems and their reconstructed attractors depicted in \autoref{fig:logmap} A. It is evident that the dimensions of the two embedded manifolds differ. Specifically, the dimension of the consequence is greater than that of the cause, signifying that the consequence `contains' the degrees of freedom of the cause. Detecting this difference in dimensionality between cause and consequence opens up the opportunity to establish a new analytical method capable of distinguishing all forms of causal relations.

The key variable in our framework is the joint dimension, denoted as $D_J$, which involves two simultaneous time series. To obtain $D_J$, one can create the direct product of the two embedded spaces 
and measure the intrinsic dimension within the resulting point cloud (see \ref{si:dim-cause} for details).
It is important to note that the asymmetry of manifold dimensions in the subspaces alone does not necessarily imply causation. However, the additional information offered by $D_J$ is adequate to ascertain the type of causal relation between the two systems.

\subsection*{Causal and dimensional relations}

In the case of two independent dynamical systems, the joint dimension is equal to the sum of the dimensions of the two independent systems. However, any interdependence renders the manifold dimensions sub-additive.

Given that the cause can be reconstructed from the consequence, the information content of the cause is already present in the embedded manifold of the consequence. Consequently, in the case of unidirectional coupling, the joint dimension is equivalent to the dimension of the consequence.

It's noteworthy that the dimension of the driven dynamical system (consequence) is always greater than or equal to the dimension of the driver dynamical system (cause). The dimensions of the two manifolds decisively determine the direction of the potential causal effect: only the lower-dimensional system can have a unidirectional causal effect on the higher-dimensional one.

A special case of driving arises when there is bidirectional coupling (sometimes called  circular or  mutual coupling). According to Takens' theorem, a bidirectional case implies the existence of homeomorphisms in both directions, making the two manifolds topologically equivalent. As a consequence, the joint dimension is equal to the dimension of both time series. However, observing that all the three dimensions are equal, there are still two possible scenarios. Either we have true bidirectional coupling between the systems, or we have the so-called generalized synchrony, indicating full determinism, in fact, meaning that we are observing the same dynamic system twice.

If the joint dimension is less than the sum of the single dimensions but not equal to either of them, it signals the presence of a hidden common cause without a direct causal effect between the two time series.

In summary, the relationship between the dimensions of the two systems and the joint dimension distinguishes the four possible causal scenarios as follows:
\smallskip

Independent case
\begin{equation}
	X \perp Y \iff D_X + D_Y = D_J
\label{indep}
\end{equation}

Unidirectional cases
\begin{subequations}
\label{uni}
\begin{gather}
X \rightarrow Y \iff  D_X < D_Y = D_J \\
X \leftarrow Y \iff  D_Y < D_X = D_J
\end{gather}%
\end{subequations}

Bidirectional case
\begin{equation}
X \leftrightarrow Y \iff D_X = D_Y = D_J
\label{cir}
\end{equation}

Common cause
\begin{equation}
	X \hcc Y \iff  \max(D_X , D_Y) < D_J <  D_X + D_Y
\label{hcc}
\end{equation}

The implications in Eqs.\ \eqref{indep}--\eqref{cir} straightforwardly connect causal relations to dimensional relations. The case of a common cause and the reasoning in the opposite direction follow from discretizing state variables; otherwise, additional assumptions are required (see \ref{si:dim-cause}). Therefore, to infer causality, we estimated these dimensions using the method proposed by Farahmand \cite{Szepesvari2007}.

As we only have estimates for manifold dimensions ($\bar{d}$) based on finite datasets, we cannot demonstrate exact equalities in Eqs.\ \eqref{indep}--\eqref{hcc}, but we assign probabilities to these causal cases.
These posterior probabilities of the dimensional relationship on the r.h.s.\ of Eqs.\ \eqref{indep}--\eqref{hcc} are inferred by comparing the estimates of joint dimension to the dimensions of the individual embedded time series.

The inference is demonstrated by the following scheme. Let $A$ be the causal relation, which unequivocally determines the relation between dimensions according to Eqs.\ \eqref{indep}--\eqref{hcc}. Also, let $\bar{d}$ denote the observed dimension vector, formed by the $\left( \bar{d}_X, \bar{d}_Y, \bar{d}_J, \bar{d}_Z \right)$ estimations of $(D_X, D_Y, D_J, D_X+D_Y)$ respectively. First, we apply Bayes' theorem

\begin{equation*}
P(A|\bar{d}) = \frac{p\left(\bar{d}|A\right)}{p(\bar{d})}P(A),
\end{equation*}
then assume a non-informative, uniform prior over the possible causal relations $ \\ ( {A \in \nobreak\{} {\leftarrow\nobreak,} {\leftrightarrow\nobreak,} {\rightarrow\nobreak,} {\hcc\nobreak,} {\perp\nobreak\}})$ and then calculate the conditional likelihood of the observed dimensions as  % "finally" changed to "then"
\begin{equation*}
p(\bar{d}|A) %
= \int p(\bar{d}|w) 
dP(w|A),
\end{equation*}
where the integrand $w$ ranges over the admissible dimension combinations, determined by $dP(\cdot | A)$.

To derive numerical formulas, we need the distribution of the measurement of manifold dimensions. We assume homogeneous sampling of the points of the manifold.  Given that the  local neighborhoods of sample points might overlap, the estimate of the joint probability distribution of the dimensions needs effective sample size correction (see \ref{Sapp}).

Consequently, the average of local dimension estimates $p(\bar{d}|w)$, i.e., the conditional measurement of the manifold dimensions, is a multivariate normal distribution centered at the mean vector $w$ where the covariance matrix $s$ accounts for the correlation across manifolds mediated by the shared information. 
Thus, we have
\begin{equation*}
p(\bar{d}|A) = \int \varphi_{w,s}(\bar{d}) \; dP_{D,\Sigma|A}(w,s|A).
\end{equation*}
The covariance $s$ is approximated by the sample covariance of local dimensions. For further details see \ref{si:probabilities}.

Dimension estimates are sensitive to parameter choices, such as embedding dimension (see \ref{si:workflow}), and are influenced by unavoidable biases. The impact of bias caused by the embedding dimension can be mitigated if the manifolds are embedded into spaces with the same embedding dimensions.
Following this principle, instead of summing the $D_X$ and $D_Y$ individual dimensions, the independent case was represented by the dimension of a manifold $Z$ that is constructed by combining $X$ and the time-permuted (independently manipulated) $Y^*$ manifolds. 

To align the embedding dimensions of $J$ and $Z$ with the embedding dimensions of $X$ and $Y$, instead of the direct product of the embedding spaces, $J' = aX + Y$ and $Z' = aX + Y^*$, embeddings of the linear combinations were used, where $a$ is a properly chosen irrational number. It's important to note that this way, $J'$ and $Z'$ are based on generic observation functions of the original systems, making them topologically equivalent to the original system in almost all cases (for proof, see \ref{si:dim-cause}).

We assume in the whole sequel that the mapping from the cause to the consequence is a generic observation of the cause in the sense of Takens' theorem. Otherwise, for instance, if the cause consists of two independent subsystems and only one (say the first one) affects the consequence, then the dimension of the  joint and the consequence manifold are different; in fact we face a common cause situation in which the cause is also consequence of the first part of itself. 

Certain situations of system interactions require additional consideration, including the case of full determinism, the coexistence of a direct connection and a hidden common driver, and the transitivity of causal relations.

In the case of full determinism, also known as generalized synchrony, the caused time series are entirely determined by the cause. In this scenario, considering them as different systems is not meaningful; we essentially have two copies/observations of the same system (potentially with a time delay). Thus, this specific unidirectional coupling cannot be distinguished from the bidirectionally coupled case, as the reconstructed topologies are equivalent in both scenarios.

There are specific cases where the common driver remains concealed. If the common cause coexists with a direct connection in at least one direction, the direct connection ensures that all the information from the cause is present in the joint; as a result, the direct connection(s) will be detectable, while the common driver remains hidden.

In theory, causality is a transitive relation. Consequently, indirect causal relations (occurring through a chain of direct connections) should be identified as direct causal relations as well. However, in real-world applications, additive noise can lead to non-transitive directed causal relations among multiple systems.

\begin{figure}
	\centering
	\includegraphics[scale=1.6]{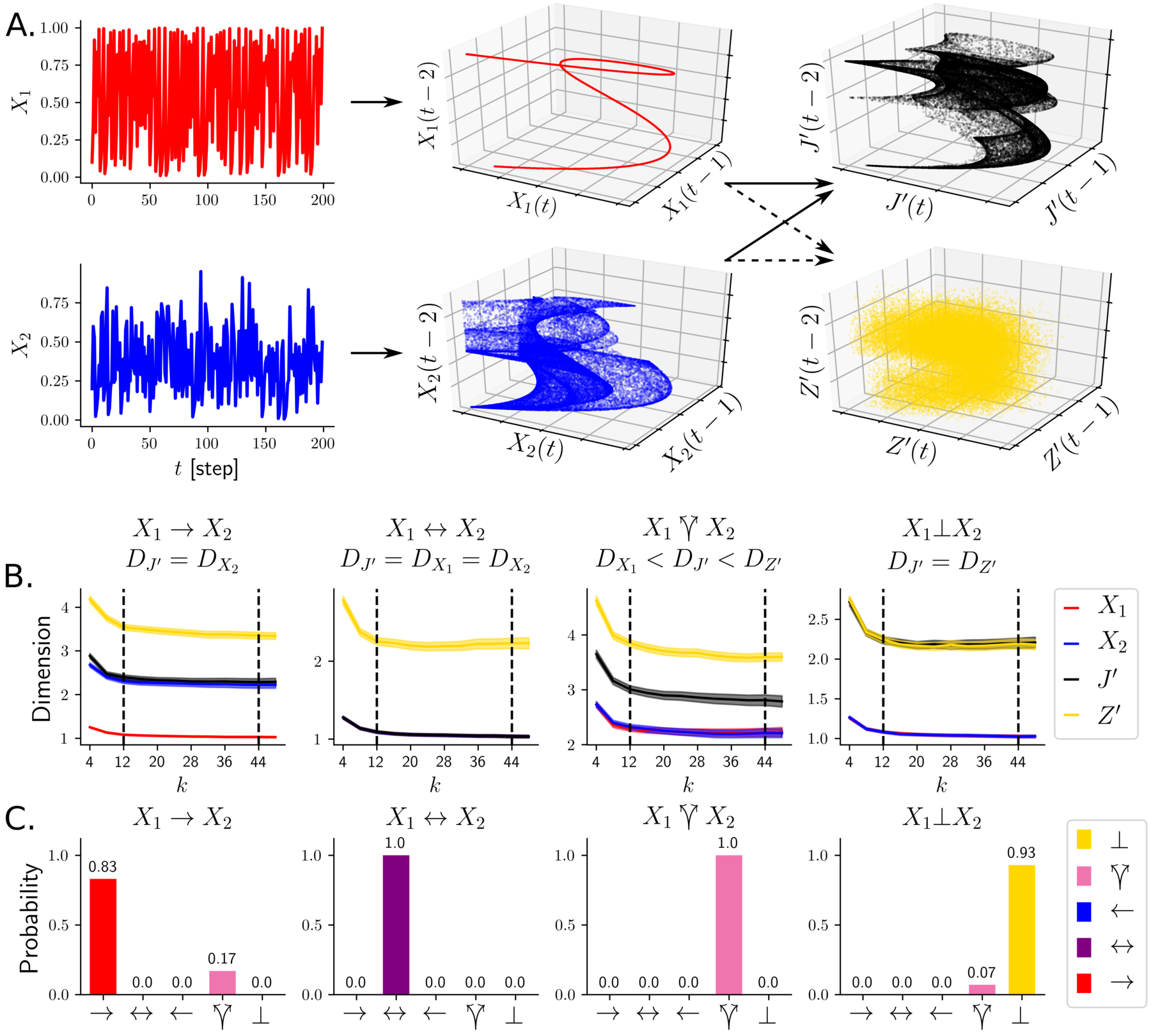}
	\caption{\textit{The workflow and testing of our method on coupled logistic map systems (see Eq.\ \eqref{eq:logist})}.
    (A)  The state spaces of the systems are reconstructed by time delay embedding of the two time series $X_1$ (red) and $X_2$ (blue), resulting in the red and blue manifolds. Then, the joint of the two datasets, $J'$, and their time-shuffled version $Z'$ are also embedded, resulting in a reconstruction of the joint state space of the two subsystems (black manifold) and their independent joint (yellow).
    On (B,C) the test of our method on the four simulated examples of five possible causal interactions (one of the unidirectional: $X_1 \rightarrow X_2$, bidirectional: $X_1 \leftrightarrow X_2$, unidirectional backwards: $X_1 \leftarrow X_2$, common cause: $X_1 \protect\hcc X_2$, independence: $X_1 \perp X_2$) are demonstrated.
    (B) The intrinsic dimensionality of each manifold is estimated for different neighborhood sizes $k$. The plateau of dimension-estimates identifies where the estimates can be considered reliable (between dashed lines). Note the match between the actual causal and dimensional relationships: the dimension of the joint manifold ($J$) relative to the others.
    (C) Posterior probabilities of the possible causal relationships. The method correctly assigned the highest probability to the actual causal relation in each case.}
	\label{fig:logmap}  
\end{figure}

\section*{Results}

We validate our method using data from three simulated dynamical systems where ground truths are known. Following this verification, the method is also applied to EEG data from epileptic patients. The three simulated dynamical systems have fundamentally different dynamics, presenting diverse challenges that causal analysis must overcome: the coupled logistic maps are discrete-time chaotic dynamical systems with no significant temporal autocorrelation; in contrast, the coupled Lorenz systems are defined in continuous time and exhibit smooth temporal autocorrelation, while the Hindmarsh--Rose models display steep spikes and quasi-periodic behavior.

\subsection*{Logistic maps} We simulated systems of three coupled logistic maps with various connectivity patterns: unidirectional, bidirectional, independent and two uncoupled maps driven by a third unobserved one:
\begin{equation}
	x_i[t+1] = r x_i[t] (1 - x_i[t]-\sum_{j\neq i} \beta_{i,j} x_j[t]),
\label{eq:logist}
\end{equation} 
where $i \in \lbrace 1, 2, 3 \rbrace$ for the three variables, $r=3.99$ and $\beta_{i,j}$ are coupling coefficients (values for the different simulated couplings are given in \ref{si:results}).
 
The performance of the method was evaluated on simulated coupled logistic maps. Our method was able to reveal the original coupling pattern between the observed logistic maps for all cases, in particular it was able to detect the existence of the hidden common cause observing only the two affected logistic maps (\autoref{fig:logmap}). For further details see \ref{si:results}.

\subsection*{Lorenz systems} We tested our method on differently coupled Lorenz systems \cite{Lorenz1963} (\autoref{fig:lorenz}). 
\begin{figure}
	\centering
	\includegraphics[scale=1.8]{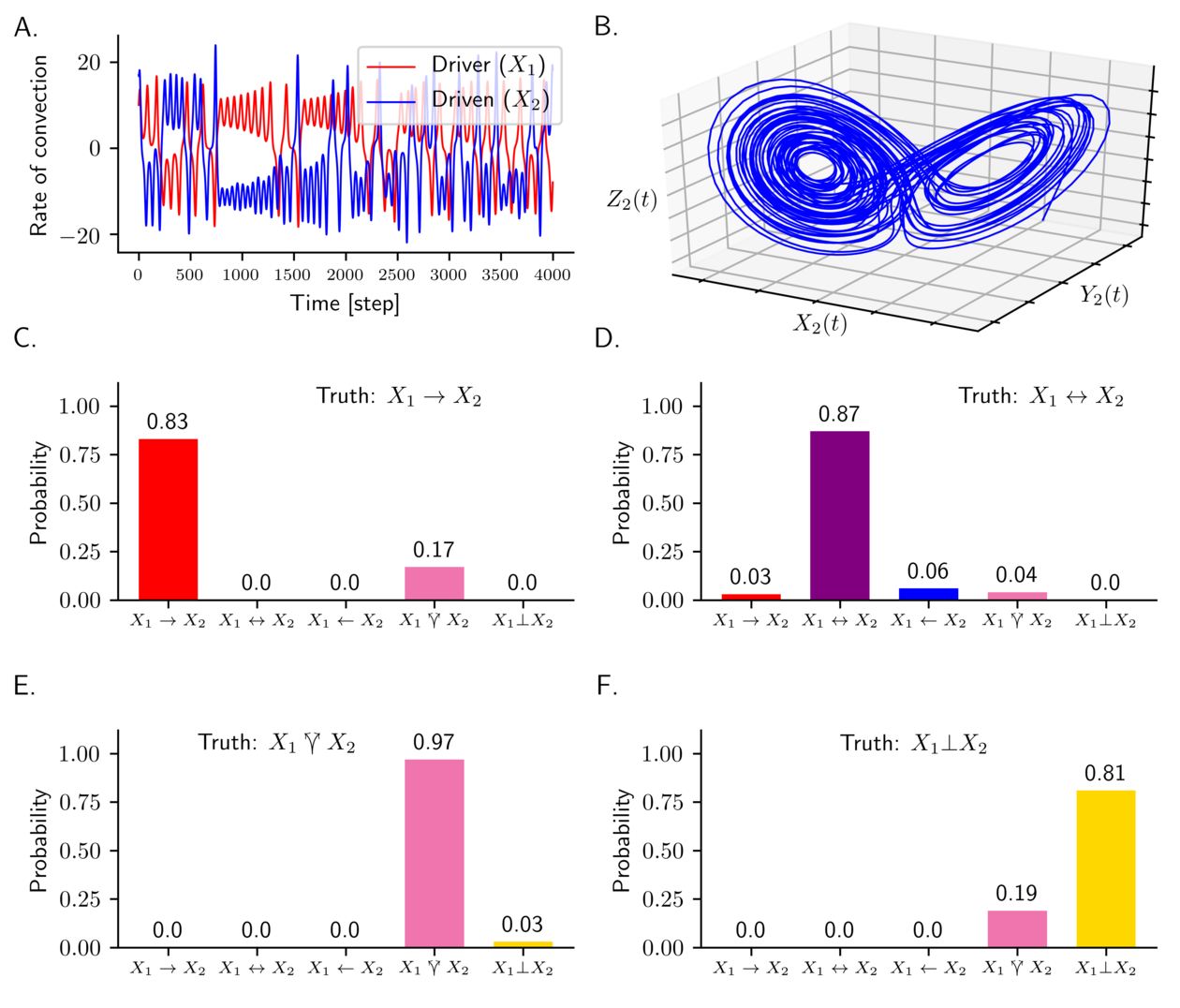}
	\caption{\textit{Testing our method on coupled Lorenz systems.}
    (A) The $X$ variables of two Lorenz systems, represented as time series. There is unidirectional coupling from the first system (the driver, red) to the second (the driven, blue). The description of the Lorenz systems is given in the Appendix.
    (B) The attractor of the driven system is only slightly perturbed by the driver.
    (C)--(F) Model probabilities nicely match with the ground truth for the different couplings. (Color code matches \autoref{fig:logmap} C.)
	}
	\label{fig:lorenz}
\end{figure}

We conclude that in each case our method detects the proper causal relationship with high confidence. For further details see \ref{si:results}.

\subsection*{Hindmarsh--Rose systems} We analyze causal relationships between coupled Hindmarsh--Rose systems which were originally proposed to model spiking or bursting neuronal activity \cite{Hindmarsh-Rose1984}. In our simulations we use two electrically coupled neurons where coupling is achieved through the membrane potential, as proposed in \cite{Xia2005coupled-hindmarsh-rose}. Analysis of such time series is quite difficult due to their sharp dynamical changes.  

\begin{figure}
	\centering
	\includegraphics[scale=2]{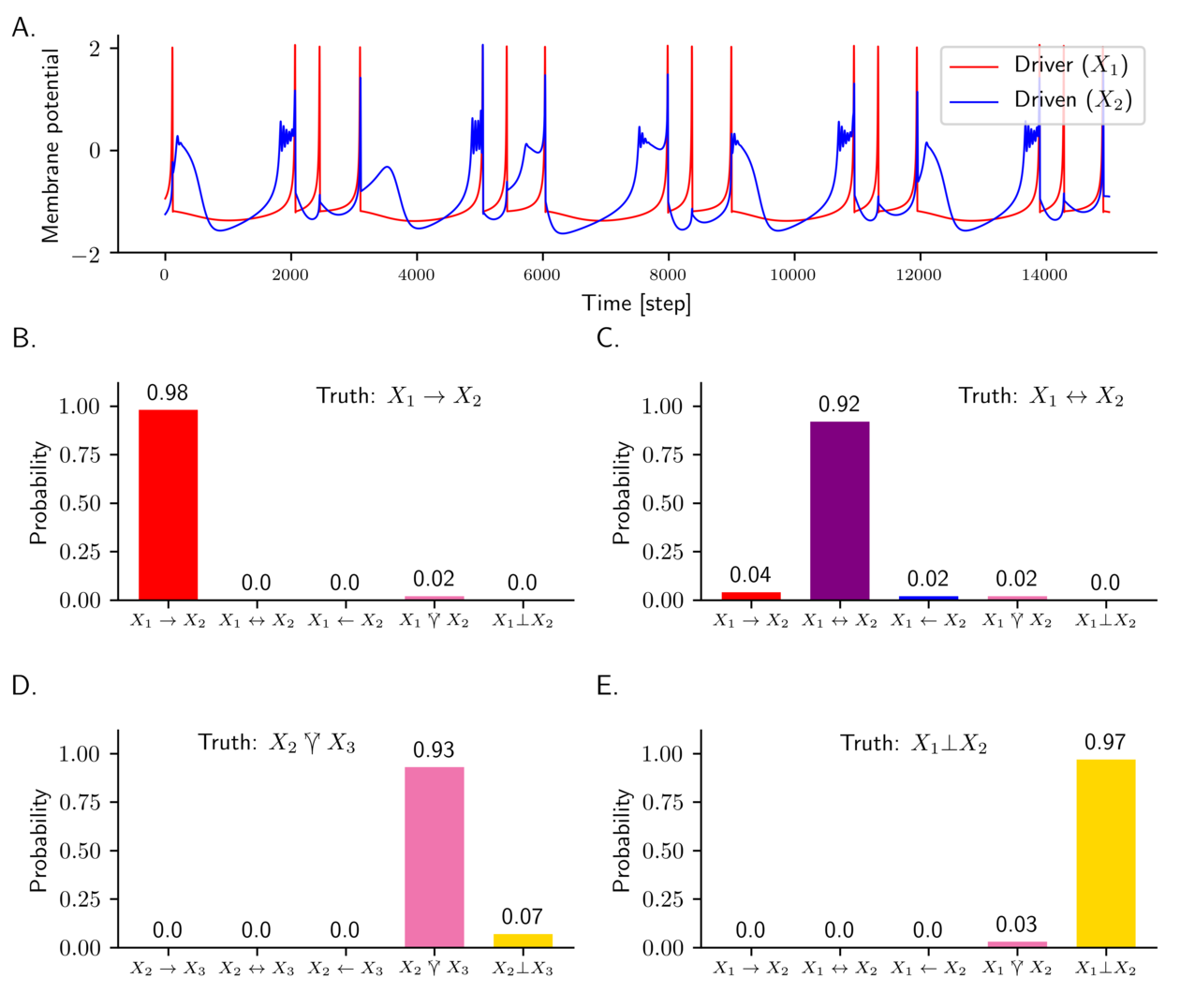}
	\caption{\textit{Testing our method on coupled Hindmarsh--Rose systems.}
    (A) The membrane potentials ($X_1 , X_2 $) of unidirectionally coupled Hindmarsh--Rose systems.
    (B)--(E) The inferred probabilities for the different types of couplings. (Color code matches with \autoref{fig:logmap} C.)
	}
	\label{fig:hindmarsh_rose}
\end{figure}

\autoref{fig:hindmarsh_rose} shows our results on different Hindmarsh--Rose systems. We conclude that in each case our method detects the proper causal relationship with high confidence. For further details about the Hindmarsh--Rose systems and model parameters we refer the reader to \ref{si:results}.

\subsection*{Changes of inter-hemispheric connectivity during photo-stimulation}

\begin{figure}
	\centering
	\includegraphics[scale=1.3]{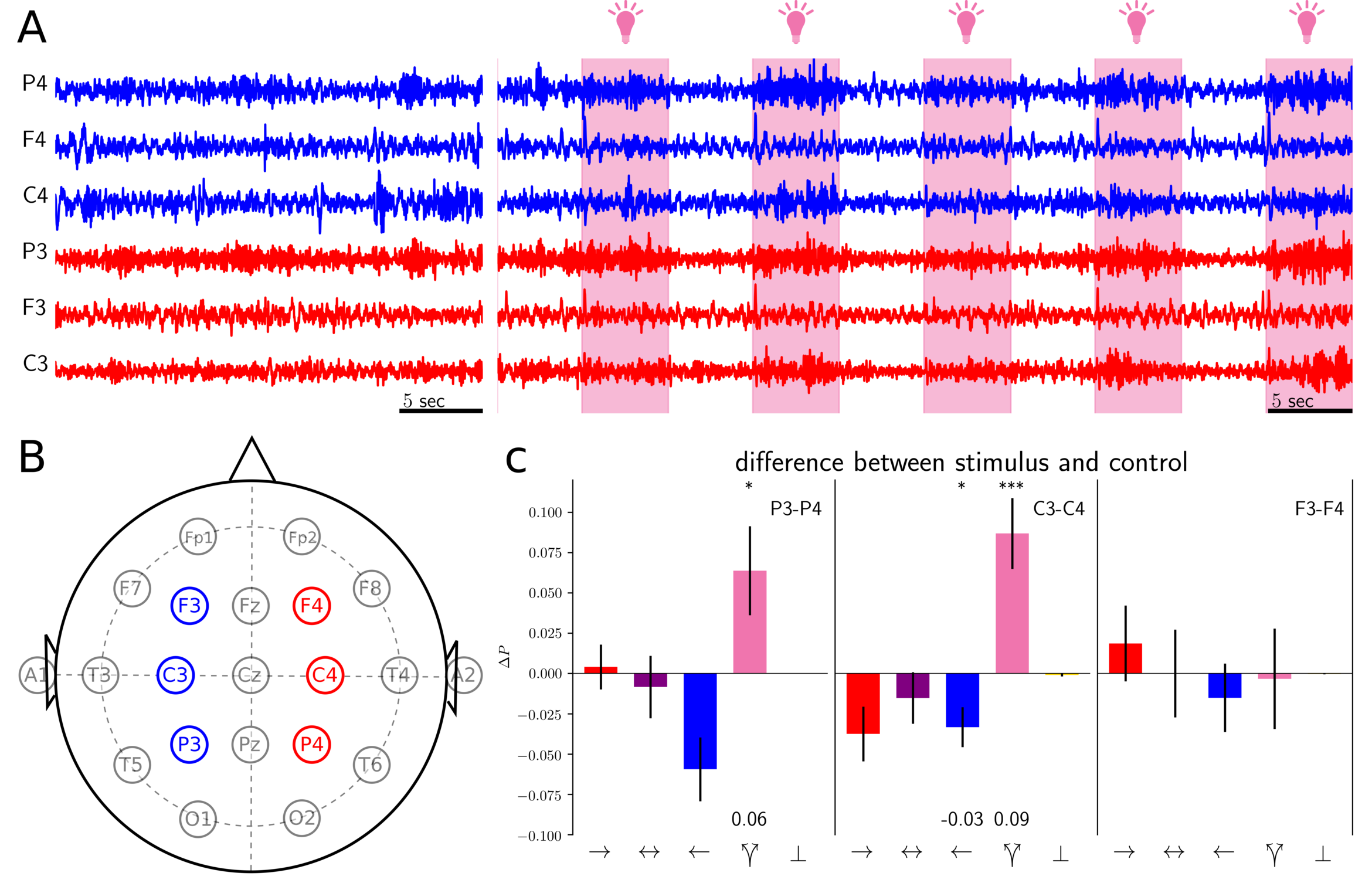}
	\caption{\textit{Inter-hemispherical interactions during  photo-stimulation.}
    (A) CSD signal in control condition and photo-stimulation periods (light bulbs) at the six analyzed recording-channels.
    (B) Electrode positions on the scalp. Causal relations were computed between $P3$--$P4$, $C3$--$C4$ and $F3$--$F4$ channel pairs.
    (C) Difference in probabilities of causal relations between stimulation and control (mean and SE). The probability of  the existence of common cause  is significantly higher during stimulation periods for $P3$--$P4$ ($p=0.024$) and $C3$--$C4$ ($p=0.0002$) channel-pairs but not for $F3$--$F4$ ($n=87$).}
	\label{fig:fotostim}
\end{figure}

Finally, we aim to assess our approach under real-world conditions, where the true dimensionality of the systems and the properties of the noise are unknown. In general, the precise causal relationships between time series in these systems are not known. However, external factors can induce changes in the internal causal relationships that our analysis method can detect. Notably, the standard epilepsy-diagnostic photo-stimulation procedure, where patients are exposed to flashing light at different frequencies in a standardized test, serves as an ideal model for an external common cause affecting the two brain hemispheres.

The connectivity between these brain regions suggests that visual information arrives first to ($O1$, $O2$ electrodes) and spreads to the parietal ($P3$--$P4$) and central ($C3$--$C4$) area (\autoref{fig:fotostim} B) afterwards. Hence, it is expected that our method is suitable to detect the visual stimulus as a common cause between the affected electrode sites.
Thus, we apply our method to Current Source Density (CSD, see \ref{si:results}) calculated from EEG recordings of 87 patients participating in the photo-stimulation task (\autoref{fig:fotostim} A).

The probability of common cause ($\hcc$) is significantly increased during visual stimulation periods for channel pairs $P3$--$P4$ and $C3$--$C4$ relative to the resting state but not for the frontal areas ($F3$--$F4$) (\autoref{fig:fotostim} C, \ref{si:results}).

\subsection*{Causal connections during epileptic seizure}

\begin{figure}
	\centering
    \includegraphics[scale=0.5]{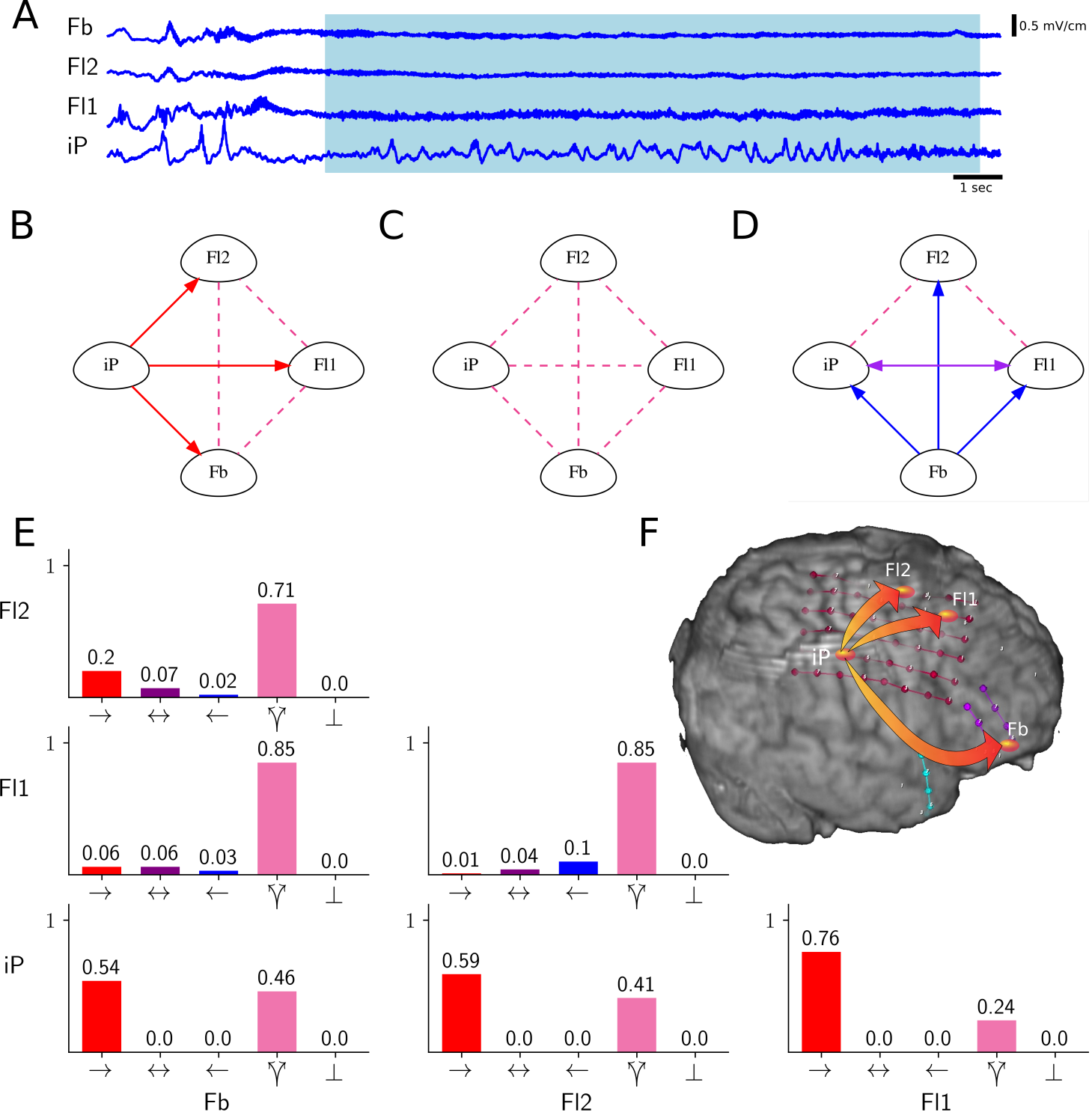}
    \caption{\textit{Cortical connectivity during epileptic seizure}
    (A) CSD signal at fronto-basal ($Fb$), frontal ($Fl1$) fronto-lateral ($Fl2$) and infero-parietal ($iP$) areas, the blue selection shows the analyzed time period of the seizure. Basically two types of connectivity were detected for seizures and a third type of connectivity for interictal conditions ($B$--$D$).
    (B) Maximal a posteriori probability (MAP) causal connection structure for the example seizure  in A. Red arrows mark unidirectional relations and the pink dashed lines mark detected common cause relations. (seizure type 1, $n=6$)
    (C) MAP causal connection structure for seizure type 2 ($n=10$)
    (D) MAP  mean causal connection structure interictal sections ($n=16$). Blue arrows mark unidirectional relations and purple arrows denote bidirectional causal relations.
    (E) Causal relation probabilities of the seizure showed on A and B.
    (F) The inferred driver and driven areas represented on the brain surface for the same seizure as on A, B, E.}
	\label{fig:memo}
\end{figure}

Our next example illustrates how identifying the causal connection between investigated areas can contribute to the diagnosis and surgical treatment of epileptic patients. Our investigations yields self-consistent results, affirming our assumption that our method is applicable in such complex situations.

We analyze EEG data from a 20-year-old patient suffering from drug-resistant epilepsy with frequent seizures. As part of the pre-surgical examination, a subdural grid and two strip electrodes were placed on the surface of the brain, enabling the identification of brain areas participating in seizure activity with high spatial precision.

The seizures presented a variable and complex pattern, with the majority of high-frequency seizure activity observed in the fronto-basal ($Fb$) and frontal ($Fl1$) regions (\autoref{fig:memo} C). The fronto-lateral ($Fl2$) region was moderately impacted by the seizures, and the infero-parietal ($iP$) region exhibited irregular high-amplitude spiking activity during both interictal and ictal periods (\autoref{fig:memo}~A).

We investigate all causal relations among the four brain regions and observe variable connectivity. Our causality analysis method reveals two main connectivity types during seizures (from the analysis of $n=18$ seizures, see Ext.\,Data\,\autoref{fig:memo_seizure_microfilm}, \ref{si:results}).
In the first main type, $iP$ is identified as the driver ($n=6$), while the pairwise analysis indicates the highest probability of a common cause between $Fb$, $Fl1$, and $Fl2$ (\autoref{fig:memo} B, E, F).
In the second main type of seizures, a hidden common cause is detected between all observed channels ($n=10$, \autoref{fig:memo} C), suggesting the possible existence of a driver beyond the investigated region. In contrast, $Fb$ is found to be the dominant driver node during normal interictal activity (\autoref{fig:memo} D, Ext.\,Data\,\autoref{fig:memo_control_microfilm}, \ref{si:results}).

Due to the difficult accessibility of the $iP$ region, the medical panel decided to resect the frontal and fronto-basal regions, leaving the less active areas intact. The patient remained seizure-free for 1 year before a relapse.

Our analysis underscores the variability of seizures and the potential existence of a hidden common cause behind observed seizure activity. This information could serve as a basis for understanding cortical reorganization and the recovery of epileptic activity in the investigated patient. However, a systematic analysis on a large population of patients will be necessary to clarify the underlying causal structures during epileptic network activity.

\section*{Conclusions}

We introduced a Bayesian causal inference method based on comparing the dimensionality of topologically embedded time series. It is the first unified approach for detecting all types of causal relations in dynamical systems. Specifically, it can uncover the existence of hidden common causes and quantifies the probability of each causal relationship type. We validated our method using synthetic data, and its self-consistency and practical applicability were demonstrated on human EEG data from epileptic patients.

The precision of our inference relies on the reliability of dimension estimations, influenced by several signal properties, including autocorrelation, sample size, noise level, and manifold geometry, especially curvature. These aspects are studied in a forthcoming paper. 
We have documented the disciplined methods we used to find well-behaving embedding dimension and other parameter settings (\ref{si:workflow}, also see \cite{Grassberger1983, Levina2005}).

The longer the signal's temporal autocorrelation, the more data points are needed to accurately sample the manifold's surface (in the reconstructed state-space) for inferring causal relations. Our simulations indicate that proper dimension determination typically requires at least a few thousand samples, which is more than what data Convergent Cross Mapping (CCM) needs to detect directed causal relations under optimal conditions. While operational with smaller sample sizes, CCM is less reliable in identifying hidden common causes and lacks assigned probabilities for causal cases. Despite being more data-intensive, our method robustly detects hidden common causes and assigns easily interpretable probabilities to each possible causal relation.

In general, the presence of observational or dynamic noise decreases the reliability of the inference, as demonstrated in \cite{hirata2016detecting}. Additionally, the curvature of the delay-reconstructed manifold causes the overestimation of local dimensionality (for a finite sample) and sets limits on the maximal noise level that allows meaningful state space reconstruction \cite{Casdagli1991}.

It is worth noting that correlation dimension and mutual information dimension have been applied to evaluate connections between dynamical systems, but neither provides an exhaustive description of bivariate causal relationships \cite{Hirata2010} due to their symmetric nature \cite{romano2016, Sugiyama2013} nor an inference machinery \cite{krakovska_2019}.

\subsection*{Future directions.} In the present paper, we focused on systems with deterministic dynamical components; however, we hope that our unified framework could be extended to pure stochastic systems \cite{Stippinger2023markov} and systems of mixed character \cite{Stippinger2023compression}, involving both deterministic and stochastic components.

In this work, we assumed no delay in the causal effect. Similar to \cite{ye2015distinguishing}, this assumption can be relaxed, allowing causality determination with different effect delays. Consequently, not only the existence of the connection but also the delay of the effect can be determined.

The pairwise determination of causal relationships used here could be expanded to multivariate comparisons to unveil more complex network structures \cite{Kurbucz2022linearlaw, Kurbucz2024NetDimension}.

In this paper, we gave a method to reveal the existence of a hidden common cause. After that we can take one step further to reconstruct it up to its topological equivalent
\cite{benko2022reconstructing}.

Our method serves as a unified approach to quantify all types of causal relationships. Based on the presented examples and tests, we believe that this new method will find applicability in various scientific areas. However, it is evident that there are many open questions and further directions to explore.
\bigskip

\begin{sloppypar}
{\bf Software availability:}
The implementation of the full inference method is freely available on GitHub at \href{https://github.com/adam-zlatniczki/dimensional_causality}{\nolinkurl{https://github.com/adam-zlatniczki}\allowbreak\nolinkurl{/dimensional_causality}}.
\end{sloppypar}
\bigskip

\begin{sloppypar}
{\bf Data availability:}
Raw photostimulation EEG dataset for \autoref{fig:fotostim} and Ext.\,Data\,\autoref{fig:fotostim_boxplot} is available at \url{https://web.gin.g-node.org/zsigmondbenko/photostim_eeg}.
Raw CSD signal of intra-cranial electrode recordings for \autoref{fig:memo}, Ext.\,Data\,\autoref{fig:memo_seizure_microfilm} and Ext.\,Data\,\autoref{fig:memo_control_microfilm}  are available at \url{https://web.gin.g-node.org/zsigmondbenko/intracranial_csd}.
\end{sloppypar}
\bigskip

{\bf Acknowledgement} The authors are grateful to dr.\ Boglárka Hajnal, dr.\ Ákos Újvári and dr.\ Anna Kelemen for their help during the clinical investigations and to Balázs Ujfalussy and Tamás Kiss for their comments on the manuscript. This research supported by grants from the Hungarian National Research, Development and Innovation Fund NKFIH K113147 (Z.S.), K135837 (Z.S.) and Human Brain Project associative grant CANON, under grant number NN 118902 (Z.S.), and the Hungarian National Brain Research Program KTIA NAP 13-1-2013-0001 (A.T., Z.S., D.F., E.L.) and KTIA-13-NAP-A-IV/1,2,3,4,6  (A.T., Z.S., D.F., E.L.). Hungarian Research Network, HUN-REN supported the project under the grant SA-114-2021 (Z.S.).
\bigskip

{\bf Statement} During the preparation of this work the authors used OpenAI's ChatGPT 3.5 in order to improve language style and grammar. After using this tool/service, the authors reviewed and edited the content as needed and take full responsibility for the content of the publication.
\bigskip

{\bf Author Contribution}
% Elsevier editorial manager listed contributions:
%  Conceptualization
%  Data curation 	
%  Formal analysis 	
%  Funding acquisition 	
%  Investigation 	
%  Methodology 	
%  Project administration 	
%  Resources 	
%  Software 	
%  Supervision 	
%  Validation 	
%  Visualization 	
%  Writing – original draft 	
%  Writing – review & editing
Z.B., A.Z., M.S., Z.S., A.T. worked out the causal inference method and wrote the manuscript. %conceptualization, methodology, writing original (Z.B., A.Z., Z.S., A.T.), writing review&editing (M.S.)
Z.B., A.Z. and M.S. ran the analysis on simulated and EEG data. %investigation
Z.B., A.Z. and M.S. are equally contributed to the work.
A.Z. implemented the algorithm. %software
A.S. recorded the photostimulation EEG data. %data curation
L.E. and D.F. did the surgery and recorded intracranial LFP data. %data curation
Z.S. and A.T. are equally contributed to the work, coined the original idea and developed the theory. %formal analysis
%funding acquisition Z.S., A.T., D.F., L.E.
All authors approved the final version of the manuscript.
\bigskip

{\bf Competing interests statement} The authors of the paper state that they do not have any competing interest.
\bigskip

{\bf Author Information} 
Correspondence and requests for materials should be addressed to \nolinkurl{somogyvari.zoltan@wigner.hun-ren.hu}.

\appendix

\section{Mathematical foundation} \label{Sapp}
\subsection{Intrinsic dimension and causality}\label{si:dim-cause}
In this section, we establish the mathematical background concerning the relationship between causal connections and the dimensions of the state space of systems.
We start with some general remarks on the possibilities and limitations of analyzing the connection of dynamical systems through an observation, a time series. It is standard to assume that the system is in a steady state, and consequently, the observed time series is stationary. We will examine the stationary distribution through its information dimension.
First, let us recall the definition of the information dimension introduced by Rényi \cite{Renyi1959}.

\begin{equation*}
d_{X}=\lim_{N\rightarrow \infty }\frac{1}{\log N} H([X]_N) ,
\end{equation*}
where $X\in \mathbb{R}^{m}$ is a  continuous  vector valued random variable, $[X]_N=\frac{\left\lfloor NX\right\rfloor }{N}$ is its $N$-quantization (a discrete variable) and $H(\cdot)$ stands for the Shannon entropy.
The finer the resolution $r=1/N$, the more accurate the approximation of the information content is. However, the normalization in the limit eliminates the contribution of the discrete part of the distribution (if it exists). This means that estimating the information dimension involves a trade-off and has limitations due to the finiteness of the sample. It is suggested that if the variable exists in a $D$-dimensional space, a proper estimate requires at least $10^D$ to $30^D$ sample points \cite{wolf1985}. Following Pincus' ideas \cite{pincus1991approximate, pincus1991regularity, pincus1995approximate}, we introduce the approximate information dimension.

\begin{equation*}
d_{X,1/N}=\frac{1}{\log N} H([X]_N) .
\end{equation*}
or in short $d_{X,r}$, for arbitrary partition with box size $r$.

\begin{assumption}
    The investigated time series are stationary.
\end{assumption}

\noindent{\bf Dimensions.}
The notion of dimension has many definitions depending on the context and methods. Takens' work was confined to topological dimensionality, but the embedding theorem was extended to fractal dimensions as well \cite{Sauer1991}. In our case, the attractor, the support of the stationary distribution, can be a fractal. We have already mentioned the Rényi information dimension.

On the other hand, the local intrinsic dimension is defined as follows.
Let $X \in \mathbb{R}^m $  be   the investigated random variable

\begin{equation*}
D_X(x)=\lim_{r\rightarrow 0 }\frac{1}{\log r} {\log P(X \in B(x,r) ) }
\end{equation*}
where $B(x,r)$ is the hypercube in $\mathbb{R}^m $ with lower corner $x$. Then the intrinsic dimension is $D_X=E(D_X(X))$
and	
\begin{equation*}
d_{X}=D_X
\end{equation*}
(see the work of Camastra and Staiano \cite{camastra2016intrinsic}, and theorems 1 and 2 of Romano et al.\ \cite{romano2016}).  ID estimation has a waste literature, several excellent reviews  \cite{campadelli2015intrinsic, camastra2016intrinsic} help to find the best performing one. Reviews and benchmark tests indicate that $k$-Nearest Neighbor-based methods have several advantages, including the rigorous derivation of the estimate presented by Levina and Bickel \cite{Levina2005} and the very good convergence properties of a variant \cite{Szepesvari2007, Benko2022dimension}. We follow the latter, particularly because it provides a hint for the choice of scaling, which is one of the crucial points in the intrinsic dimension (ID) estimation procedure.

\begin{assumption}
    The embedded manifolds  are homogeneous with respect to (the existing) dimension.
\end{assumption}
Based on the previous introduction our intrinsic dimension estimates are

\[
\widehat{D}\left( x\right)_r =\frac{1}{\log r}\log \left\vert N\left(x,r\right) \right\vert 
\]%
and%
\[
\widehat{D}_{X,r} =\frac{1}{n}\sum_{i=1}^{n}\widehat{D}\left(x_{i}\right)_r ,
\]%
where $n$ is the sample size, $\left\{ X_{i}\right\} _{i=1}^{n}$ is the set of  sample points on the  manifold and $N(x,r)= \left\lbrace X_i:X_i \in  B\left(x,r\right)  \right\rbrace$.

We shall use a bit reversed logic (following Farahmand, Szepesvári and Audibert \cite{Szepesvari2007}) and calculate the dimension from the distance of $k^\text{th}$ nearest neighbor
$r(x,k)=d(x,X^k(x) )$
where $X^k(x)$ is the $k^\text{th}$ closest point to $x$ in our sample series. In this setting the resolution is given by
\begin{equation}
r^D \approx \frac{k}{n}	
\end{equation}
where $n$ is the sample size.
\smallskip

\noindent{\bf Causal relations.}
Let us introduce the box partitioning of $\mathbb{R}^m$ using $r \mathbb{Z}^m$ and denote the boxes by $B(x,r)$, where $x$ is the lower left corner.  Also we index the partition element $B(x,r)$  by $x$.
The quantized variables are defined as follows.  
\begin{equation}
X^r=x \text{ if   } X\in B(x,r)
\end{equation}
\begin{definition}
    We say that $X$ causes/drives $Y$ at resolution $r$ (denoted by $X \rightarrow_r Y$)
    if there is a mapping $f$ s.t.\ $X_{t}^{r}= f \left( Y_{t}^{r}\right) $  for all $t$.
    \label{drive}
\end{definition}

In the case of a time-delayed causal relation between the systems with lag $\tau$, we have $X_{t-\tau}^{r} = f \left( Y_{t}^{r}\right)$; therefore, the proper time shift should be applied as a preprocessing step.

\begin{definition}
    We say that $X$ and $Y$ are in bidirectional causal relation at resolution $r$ (denoted by $X \leftrightarrow_r Y$), if both drives the other at resolution $r$.
\end{definition}

\begin{definition}
    We say that the $x_t$ and $y_t$ observations are basically identical (also called generalized synchrony) if there is a bijective mapping $f$ s.t.\ $x_{t}= f \left( y_{t}\right)$ for all $t$.
    From a dynamical systems point of view, generalized synchrony is a special case of bidirectional causality.
\end{definition}

% Do not \autoref here, because name is inferred from the counter, not the environment!
One should note that we defined causality (Definition \ref{drive}) counterintuitively in a somewhat reversed direction. $X$ can be reconstructed from the information contained in $Y$, given that this is exactly what is conveyed from $X$ to $Y$ in the action of driving. If only partial reconstruction is possible, say $X=(X',Y'')$ and $Y=(Y',Y'')$, where all the components are independent, then $Y''$ is a common cause of $X$ and $Y$, and $X$ does not drive $Y$ or vice versa.

\begin{definition}
    We say that $X$ and $Y$ are independent at resolution $r$ (denoted by $X \perp_r Y$ if $X^r$ is independent of $Y^r$.  
\end{definition}
\begin{definition}
    We say that $\left\{ Y\right\} ,$ $\left\{ X \right\} $ have a common
    cause at resolution $r$ (denoted by $X$ $\mathrel{\scalebox{0.8}{$\curlyveeuparrow$}}_r$ $Y$)
    if they are, at resolution $r$, not independent and there is no driving connection between them again at resolution $r$.
\end{definition}

\noindent{\bf Link between causal and dimensional relations.}
First, we recall some elementary facts which will be useful. Let $\mathbb{J}=\left(X,Y\right)$ be the joint embedding. The intrinsic dimension $D$ coincides with the  information dimension $d$ and the same applies for the quantized versions \cite{romano2016}.

\begin{lemma}
    \label{LHH}
    The elementary properties of the discrete Shannon entropy imply that for all $r$
    
    \begin{equation}    
    \max \left\{ H\left( X^r \right) ,H\left( Y^r \right) \right\} \leq H\left(
    X^r,Y^r\right)  
    \leq H\left( X^r\right) +H\left( Y^r\right)
    \end{equation}

    and 
    \begin{equation}
    \max \left\{ D_{X,r},D_{Y,r}\right\} \leq D_{\left( X,Y\right),r }\leq D_{X,r}+D_{Y,r},
    \label{e2}
    \end{equation}
    where equality on the r.h.s.\ holds if and only if the variables are independent at resolution $r$ $\left(X \perp_r Y \right)$.   
\end{lemma}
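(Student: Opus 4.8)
The plan is to prove the entropy chain first and then transfer it to dimensions by dividing through by $\log N$, exploiting the fact that by definition $d_{X,1/N} = H([X]_N)/\log N$ and that, by the identification $D = d$ recalled above (Romano et al.), the same coincidence holds for the intrinsic dimensions and their quantized versions. So the two displayed chains are really one statement at the level of entropy, rescaled.

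First I would establish the two entropy inequalities using only standard properties of the discrete Shannon entropy. For the upper bound, the chain rule gives $H(X^r, Y^r) = H(X^r) + H(Y^r \mid X^r)$; since conditioning does not increase entropy, $H(Y^r \mid X^r) \leq H(Y^r)$, whence $H(X^r, Y^r) \leq H(X^r) + H(Y^r)$. For the lower bound, the same decomposition together with nonnegativity of conditional entropy, $H(Y^r \mid X^r) \geq 0$, yields $H(X^r, Y^r) \geq H(X^r)$, and by symmetry $H(X^r, Y^r) \geq H(Y^r)$, so $H(X^r, Y^r) \geq \max\{H(X^r), H(Y^r)\}$.

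Next I would pin down the equality case on the right. The identity $H(X^r, Y^r) = H(X^r) + H(Y^r)$ is equivalent to $H(Y^r \mid X^r) = H(Y^r)$, i.e. to the vanishing of the mutual information $I(X^r; Y^r) = 0$. Applying the strict-convexity argument for relative entropy between the joint law and the product of the marginals, one gets $I(X^r; Y^r) = 0$ if and only if the two laws coincide, that is, $X^r$ and $Y^r$ are independent, which is exactly $X \perp_r Y$. Finally I would divide the whole chain by $\log N = \log(1/r) > 0$ (legitimate at resolutions $r < 1$): since division by a positive constant commutes with the maximum, the lower bound becomes $\max\{d_{X,r}, d_{Y,r}\} \leq d_{(X,Y),r}$ and the upper bound becomes $d_{(X,Y),r} \leq d_{X,r} + d_{Y,r}$, with the equality condition carried over unchanged; replacing $d$ by $D$ gives exactly \eqref{e2}.

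There is no genuinely hard step here, as everything rests on textbook entropy inequalities, so the only point that demands real care is the equality characterization. The implication independence $\Rightarrow$ additivity is immediate, but for the converse I must invoke the strict convexity so that additivity \emph{at resolution $r$} forces exact factorization of the joint distribution of $(X^r, Y^r)$, and hence independence $X \perp_r Y$ in the sense of the definition above, rather than only some weaker asymptotic statement.
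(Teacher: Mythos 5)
Your proof is correct and matches the paper's route: the paper simply asserts the lemma follows from "elementary properties of the discrete Shannon-entropy," and you supply exactly those properties (chain rule, nonnegativity and monotonicity of conditional entropy, vanishing mutual information for the equality case) before rescaling by $\log(1/r)$ and invoking the identification $D_{X,r}=d_{X,r}$ to obtain \eqref{e2}. Nothing further is needed.
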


\begin{corollary}
    \label{CDD}
    \label{bB}$X \perp_r Y$ iff.
    
    \begin{equation}
    D_{\left( X,Y\right),r }=D_{X,r}+D_{Y,r}.
    \end{equation}
    
\end{corollary}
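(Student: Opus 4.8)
The plan is to read the corollary directly off Lemma \ref{LHH}, whose right-hand inequality $D_{(X,Y),r} \leq D_{X,r} + D_{Y,r}$ already carries the biconditional: equality on the r.h.s.\ holds if and only if $X \perp_r Y$. So the corollary is nothing more than the equality case of that inequality, and the only real task is to make the two directions explicit and to verify that passing from entropies to dimensions preserves the ``if and only if''.

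First I would argue entirely at the level of the quantized Shannon entropies at a fixed resolution $r$. The right-hand bound in Lemma \ref{LHH} is the subadditivity $H(X^r,Y^r) \leq H(X^r) + H(Y^r)$, which rearranges to $H(X^r) + H(Y^r) - H(X^r,Y^r) = I(X^r;Y^r) \geq 0$. The standard characterization of the equality case of subadditivity is that the mutual information $I(X^r;Y^r)$ vanishes precisely when the quantized variables $X^r$ and $Y^r$ are independent, which is exactly the definition of $X \perp_r Y$. Hence at the entropy level
\[
H(X^r,Y^r) = H(X^r) + H(Y^r) \iff X \perp_r Y
\]
is immediate in both directions.

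Second I would transport this equivalence to dimensions. Since the approximate information dimension is the entropy divided by the fixed normalizing factor $\log N = \log(1/r)$, and since intrinsic and information dimensions coincide ($D = d$) by the cited results of Romano et al., at each fixed $r$ one has $D_{(X,Y),r} = H(X^r,Y^r)/\log(1/r)$, $D_{X,r} = H(X^r)/\log(1/r)$, and likewise for $Y$. Dividing the entropy identity by the common positive constant $\log(1/r)$ turns $H(X^r,Y^r) = H(X^r) + H(Y^r)$ into $D_{(X,Y),r} = D_{X,r} + D_{Y,r}$ and back, so the biconditional passes through unchanged.

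The delicate point, and the reason the statement is clean, is that the corollary is formulated at a fixed resolution $r$, i.e.\ for the approximate dimension, rather than in the $r \to 0$ limit. At fixed $r$ the equality case of subadditivity is exact and the normalization is a harmless positive constant, so nothing is lost. The genuinely analytic content, such as the behaviour of the discrete part of the distribution under the limit and the resolution-versus-sample-size trade-off discussed earlier, would only intrude if one tried to pass to the limiting dimensions; the corollary sidesteps this by staying at fixed $r$, which is precisely why it collapses to the elementary equality case of entropy subadditivity already recorded in Lemma \ref{LHH}.
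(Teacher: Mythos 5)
Your proposal is correct and follows essentially the same route as the paper: the paper's proof is the one-line remark that the statement is immediate from the additivity of Shannon entropy for independent variables, relying on the equality clause already recorded in Lemma \ref{LHH}. You simply make both directions explicit via the vanishing of $I(X^r;Y^r)$ and note that dividing by the common normalizing factor $\log(1/r)$ preserves the biconditional, which is a faithful elaboration rather than a different argument.
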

\begin{proof}
    The statement is immediate from the additivity of the Shannon entropy of
    independent variables. 
\end{proof}    
\begin{corollary}
    If  $X\rightarrow_r Y$ then 
    
    \begin{equation*}
    D_{X,r}\leq D_{Y,r}=D_{\left( X,Y\right),r }
    \end{equation*}
    and if $X\leftrightarrow_r Y$ then
    
    \begin{equation*}
    D_{Y,r}=D_{X,r}=D_{\left( X,Y\right),r }.
    \end{equation*}
\end{corollary}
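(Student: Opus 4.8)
The plan is to reduce both claims to the elementary entropy identity that holds whenever one quantized variable is a deterministic function of another, and then to pass from entropies to the approximate dimensions $D_{\cdot,r}$ by dividing through by the common positive normalization $\log(1/r)$. Here I use that, by the cited coincidence of intrinsic and information dimension, $D_{\cdot,r}=d_{\cdot,r}$ is exactly $H(\cdot^r)/\log(1/r)$, so every entropy relation transfers verbatim to a dimension relation once we divide by the fixed factor $\log(1/r)>0$.

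First I would unpack the hypothesis $X\rightarrow_r Y$. By Definition \ref{drive} there is a map $f$ with $X_t^r=f(Y_t^r)$ for all $t$, so at the level of the stationary distribution $X^r$ is a deterministic function of $Y^r$ and hence the conditional entropy vanishes, $H(X^r\mid Y^r)=0$. The chain rule for Shannon entropy then gives
\begin{equation*}
H(X^r,Y^r)=H(Y^r)+H(X^r\mid Y^r)=H(Y^r).
\end{equation*}
Dividing by $\log(1/r)$ turns this into $D_{(X,Y),r}=D_{Y,r}$. For the remaining inequality I would invoke the left-hand bound of Lemma \ref{LHH}, namely $\max\{D_{X,r},D_{Y,r}\}\le D_{(X,Y),r}$, which forces $D_{X,r}\le D_{(X,Y),r}=D_{Y,r}$. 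Stringing these together yields $D_{X,r}\le D_{Y,r}=D_{(X,Y),r}$, the first assertion.

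For the circular case $X\leftrightarrow_r Y$, the definition gives both $X\rightarrow_r Y$ and $Y\rightarrow_r X$. Applying the first part in each direction produces $D_{X,r}\le D_{Y,r}$ and $D_{Y,r}\le D_{X,r}$, so $D_{X,r}=D_{Y,r}$; since each direction also equates $D_{(X,Y),r}$ with the larger of the two, all three quantities coincide.

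The only genuinely delicate point, and the step I would treat most carefully, is the passage from the pointwise relation ``$X_t^r=f(Y_t^r)$ for all $t$'' to the vanishing conditional entropy $H(X^r\mid Y^r)=0$. This requires that the functional dependence observed along the trajectory really makes $X^r$ a measurable function of $Y^r$ under the invariant measure, i.e.\ that no $Y^r$-box carries positive probability while mapping to two distinct $X^r$-boxes; this is precisely where the stationarity Assumption and the single-valuedness built into Definition \ref{drive} enter. Everything downstream is just the entropy chain rule together with the already-established Lemma \ref{LHH}, so the argument is short once this measurability is secured.
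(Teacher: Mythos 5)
Your argument is correct and matches the paper's own proof: both rest on the chain rule $H(X^r,Y^r)=H(X^r\mid Y^r)+H(Y^r)$ together with $H(X^r\mid Y^r)=H(f(Y^r)\mid Y^r)=0$, with the remaining inequality supplied by Lemma \ref{LHH} and the circular case obtained by applying the unidirectional result in both directions. Your added remark on why the pointwise relation $X_t^r=f(Y_t^r)$ yields vanishing conditional entropy under the invariant measure is a reasonable clarification of a step the paper leaves implicit, but it does not change the route.
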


\begin{proof}
    Both statements follow from the fact that if $%
    X^r=f\left( Y^r \right) $ then 
    $H\left( X^r|Y^r\right)=H\left( f\left( Y^r\right)|Y^r \right)=0$:

    \begin{equation}    
    \begin{array}{rcl}
    H\left( X^r,Y^r\right) & = & H\left(X^r|Y^r \right) + H\left( Y^r\right) \\
    & = & H\left( f\left( Y^r\right)\right|Y^r) + H\left( Y^r\right) \\
    & = & H\left( Y^r\right).
    \label{ee3}
    \end{array}
    \end{equation}

\end{proof}

\begin{theorem}
    \label{Tdirect}If $D_{\left( X,Y\right),r }=D_{Y,r}$ then $X$ drives $Y$ at resolution $r$.
\end{theorem}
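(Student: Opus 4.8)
The plan is to run the entropy computation of the preceding corollary in reverse: convert the dimensional identity back into a vanishing conditional entropy, and then read off the driving function directly.

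First I would translate the hypothesis from dimensions into entropies. Since at each fixed resolution $r$ the (quantized) intrinsic dimension coincides with the information dimension, we have $D_{X,r}=H(X^r)/\log(1/r)$, and likewise $D_{Y,r}=H(Y^r)/\log(1/r)$ and $D_{(X,Y),r}=H(X^r,Y^r)/\log(1/r)$. Multiplying the hypothesis $D_{(X,Y),r}=D_{Y,r}$ through by the positive factor $\log(1/r)$ therefore yields the entropy identity
\[
H(X^r,Y^r)=H(Y^r).
\]

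Next I would apply the chain rule $H(X^r,Y^r)=H(X^r\mid Y^r)+H(Y^r)$ --- the very same decomposition used in Eq.~(\ref{ee3}) --- which, combined with the previous display, forces $H(X^r\mid Y^r)=0$. This is exactly the converse of the step taken in the corollary: there the functional relation $X^r=f(Y^r)$ made the conditional entropy vanish, whereas here the vanishing conditional entropy is meant to \emph{produce} the functional relation. The final step then invokes the standard information-theoretic characterisation that $H(X^r\mid Y^r)=0$ holds if and only if $X^r$ is (almost surely) a deterministic function of $Y^r$, i.e.\ there is a map $f$ with $X^r=f(Y^r)$. By Definition~\ref{drive} this is precisely the assertion that $X$ drives $Y$ at resolution $r$.

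The main obstacle will be the qualifier hidden in that last step. Zero conditional entropy only guarantees $X^r=f(Y^r)$ with probability one under the stationary distribution, whereas Definition~\ref{drive} asks for $X_t^r=f(Y_t^r)$ at \emph{every} $t$. To close this gap I would appeal to the standing stationarity assumption: the observed trajectory stays supported on the stationary measure, so the null set on which the functional relation could fail carries no trajectory points, and the pointwise statement holds throughout the support. One should also note that $f$ is pinned down only on the set of $Y^r$-values actually attained, which is all the definition requires, and that the argument is genuinely a resolution-$r$ statement, so no limit $r\to 0$ needs to be taken.
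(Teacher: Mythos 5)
Your argument is essentially the paper's own proof: the paper likewise passes from $D_{\left( X,Y\right),r }=D_{Y,r}$ to $H\left(X^r|Y^r\right)=0$ and then invokes the standard fact that vanishing conditional entropy yields a mapping $f$ with $X^r=f(Y^r)$, which is Definition \ref{drive}. Your version is somewhat more explicit (spelling out the entropy-to-dimension normalization and the almost-sure versus for-all-$t$ point, which the paper glosses over), but the route is the same.
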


\begin{proof}
    It is immediate from the condition that
    
    \begin{equation*}
    \begin{array}{rcl}
    H\left(X^r|Y^r\right)=0
    \end{array}
    \end{equation*}
    but that implies that there is a mapping $f$ s.t.\ $X^r=f(Y^r)$.
    
\end{proof}

\begin{theorem}
    \label{cC} $X^r$ and $Y^r$ have a common cause if and only if
    
    \begin{equation*}
    \max \left\{ D_{X,r},D_{Y,r}\right\} < D_{\left( X,Y\right), r } < D_{X, r}+D_{Y, r}.
    \end{equation*}
\end{theorem}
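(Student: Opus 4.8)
The plan is to prove the two implications of the stated equivalence separately, translating each half of the (negatively phrased) definition of common cause into one of the two strict inequalities. Recall that $X^r$ and $Y^r$ having a common cause at resolution $r$ means precisely that they are not independent at resolution $r$ and that neither drives the other at resolution $r$. The non-independence clause will correspond to the strict upper bound $D_{(X,Y),r} < D_{X,r} + D_{Y,r}$, while the absence of any driving connection will correspond to the strict lower bound $\max\{D_{X,r}, D_{Y,r}\} < D_{(X,Y),r}$. Throughout I would lean on Lemma~\ref{LHH} and its equality characterization, Corollary~\ref{CDD}, Theorem~\ref{Tdirect}, and the corollary stating that a driving relation forces the joint dimension to collapse onto the driven coordinate's dimension.

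For the direction ($\Leftarrow$), assume both strict inequalities hold. First I would invoke Corollary~\ref{CDD}: since $D_{(X,Y),r} < D_{X,r} + D_{Y,r}$ strictly, equality on the right of Lemma~\ref{LHH} fails, so $X$ and $Y$ are not independent at resolution $r$. Next, to exclude driving, suppose toward a contradiction that $X \rightarrow_r Y$; the driving corollary then gives $D_{(X,Y),r} = D_{Y,r} \le \max\{D_{X,r}, D_{Y,r}\}$, contradicting the assumed strict lower bound. The symmetric argument with the roles of $X$ and $Y$ exchanged rules out $Y \rightarrow_r X$. Hence neither drives the other, and combined with non-independence this is exactly the definition of a common cause.

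For the direction ($\Rightarrow$), assume a common cause. Non-independence together with the equality case of Lemma~\ref{LHH} immediately yields the strict upper bound $D_{(X,Y),r} < D_{X,r} + D_{Y,r}$. The strict lower bound is where the real work sits, since Lemma~\ref{LHH} only gives $\max\{D_{X,r}, D_{Y,r}\} \le D_{(X,Y),r}$ and I must upgrade this to a strict inequality. I would argue by contradiction, supposing $\max\{D_{X,r}, D_{Y,r}\} = D_{(X,Y),r}$. If the maximum is attained by $D_{Y,r}$, then $D_{(X,Y),r} = D_{Y,r}$ and Theorem~\ref{Tdirect} forces $X$ to drive $Y$; if it is attained by $D_{X,r}$, then using the exchange symmetry $D_{(X,Y),r} = D_{(Y,X),r}$ and applying Theorem~\ref{Tdirect} with the coordinates swapped forces $Y$ to drive $X$. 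Either conclusion contradicts the absence of a driving connection, so the inequality is strict.

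The main obstacle I anticipate is precisely this last step: the negative no-driving hypothesis must be converted into a strict separation of dimensions, and the only lever available is the contrapositive of Theorem~\ref{Tdirect}. The one subtlety to handle carefully is the exchange symmetry of the joint embedding, $D_{(X,Y),r} = D_{(Y,X),r}$, which lets a single theorem cover both orientations of a possible driving relation; everything else is bookkeeping on the chain of inequalities inherited from Lemma~\ref{LHH}.
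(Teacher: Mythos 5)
Your proposal is correct and is exactly the argument the paper compresses into its one-line proof ("follows from the combination of Corollary~\ref{CDD} and Theorem~\ref{Tdirect}"): you use Corollary~\ref{CDD} and the equality case of Lemma~\ref{LHH} to equate non-independence with the strict upper bound, and Theorem~\ref{Tdirect} (with its converse corollary) to equate the absence of driving with the strict lower bound. Your write-up just makes explicit the case analysis and the symmetry $D_{(X,Y),r}=D_{(Y,X),r}$ that the paper leaves implicit.
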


\begin{proof}
    The statement follows from the combination of  Corollary \ref{CDD}  and
    Theorem~\ref{Tdirect}.
\end{proof}

If we combine all our results from Lemma \ref{LHH} to Theorem \ref{cC} then we have the following full table of implications.

\begin{equation*}
\begin{bmatrix}
X\rightarrow_r Y & \Longleftrightarrow  & D_{X,r}<D_{Y,r}=D_{\left( X,Y\right),r } \\ 
Y\rightarrow_r X & \Longleftrightarrow  & D_{Y,r}<D_{X,r}=D_{\left( X,Y\right),r } \\ 
X\leftrightarrow_r Y & \Longleftrightarrow & D_{X,r}=D_{Y,r}=D_{\left( X,Y\right),r }
\\ 
X\text{ }\mathrel{\scalebox{0.8}{$\curlyveeuparrow$}}_r\text{ }Y & 
\Longleftrightarrow  & \max \left\{ D_{X,r},D_{Y,r}\right\} <D_{\left( X,Y\right),r
}<D_{X,r}+D_{Y,r} \\ 
X\perp_r Y & \Longleftrightarrow & D_{\left( X,Y\right),r }=D_{X,r}+D_{Y,r}%
\end{bmatrix}
\end{equation*}

Let us note that the statements are confined to a finite resolution case.  If  $r \rightarrow 0 $ the statements remain valid if the distribution of the  driver system and $f$, the mapping function to the driven, is twice continuously differentiable.  So we assume, that the distribution of the information injected into the driven system, has no discrete part. 
\medskip

\noindent{\bf Causality detection using additive observation of the series.}
The genericity of Takens' theorem allows us to choose an observation function (mapping) almost freely. It is worth to note that the observation function shall be continuous to not introduce a discrete part in the manifold distributions.

As demonstrated in Eq.\ \eqref{e2}, the direct product of the embedding of the series has such good properties. On the other hand, it increases the embedding dimension, which is unfortunate in practice, so we may look for other mappings. The simplest, linear one, is a natural choice. Let us imagine $W=X+Y$. In general, it is again a good mapping, but if, in particular, $X=X_{1}-Y$, for independent $X_{1}$ and $Y$, the result would be a drop in dimension.

There is a simple resolution to that problem. If we choose a random number $a \in [0,1]$, where $a \neq 0$, uniformly, and consider $V=aX+Y$, then, with probability one, $V$ will be a proper observation function and can replace the direct product $J$. Also, if we know that $D_X$ and $D_Y$ are less than $m+1$, and we choose $a_i \in [-1,1], i=1,\ldots,m+1$ as random numbers, and set $V_i = a_iX-Y$, at least one should be a proper observation function, and the incorrect ones can be identified by the drop in dimension.

\subsection{Assigning probabilities to causal relations}\label{si:probabilities}
	
In the whole sequel, we omit marking the resolution $r$ where it does not cause confusion. Given the constraints $\max\left\lbrace{D_X,D_Y}\right\rbrace \leq D_J \leq D_Z \leq D_X+D_Y$, we have the next complete partition of the event space of possible causal relationships

\[
\begin{array}{cccc}
& \text{direct drive} &   \hcc &   \perp \\ 
D_{X}<D_{Y} & A_{1,1} & A_{1,2} & A_{1,3} \\ 
D_{X}>D_{Y} & A_{2,1} & A_{2,2} & A_{2,3} \\ 
D_{X}=D_{Y} & A_{3,1} & A_{3,2} & A_{3,3}%
\end{array}%
\]
and $\mathcal{A} = \left\lbrace A_{i,j}:i,j=1,2,3 \right\rbrace $ (in the main text, $A \in \mathcal{A}$ in the Bayesian argument). 
In the main text the causal relations referred as $X \rightarrow Y$, corresponds here to $A_{1,1}$, $X \leftarrow Y$ to $A_{2,1}$, $X \leftrightarrow Y$ to $A_{3,1}$, while $X \hcc Y = \sum_{i=1}^3 A_{i,2}$ and $X \perp Y = \sum_{i=1}^3 A_{i,3}$.

Based on the work of Romano et al.\ \cite{romano2016}, we consider the expected value of local dimensions to be the global dimension; therefore, the mean of the local dimension estimates yields our estimate of the global dimension. Consequently, the sampling distribution of the global dimension estimates is multivariate normal, with the true dimensions $D$ being its mean and covariance matrix $\Sigma$.

Let $\bar{D}$ denote the observed global dimension that depends on $D$ and $\Sigma$. Note, however, that $D$ and $\Sigma$ are uncertain parameters themselves since they depend on the causal model $A$. The dependence of $D$ on $A$ is trivial, as indicated by the links derived in the previous sections between causality and dimensions. However, $A$ also has an effect on the noise model $\Sigma$: for example, if the causal relationship is $X \rightarrow Y$ ($A = A_{1,1}$), then in theory, $D_Y = D_J$, and therefore, $D_Y$ and $D_J$ should be correlated.

Let $U$ represent the parameters of our method (e.g., $k$, $\tau$, embedding dimension $m$). Our method's parameters also affect dimension estimation, thereby influencing the noise model (but not the true dimensions $D$).

Finally, we can conclude that $\bar{D}$ depends on $D$ and $\Sigma$, $D$ depends only on $A$, while $\Sigma$ depends on both $A$ and $U$, where $U$ is the set of the hyperparameters.

In this setting the realization of $\bar{D}$ is a single 4-dimensional data point, let's denote it as $\bar{d}$. The likelihood of the data can be written as

\begin{equation*}
p_{\bar{D}}(\bar{d}) = \sum_{A_{ij} \in \mathcal{A}} p_{\bar{D}|A}(\bar{d}|A_{ij}) \; P_{A}(A_{ij}).
\end{equation*}

We are interested in the probabilities of each $A_{ij}$ given the data, therefore we apply Bayes' theorem and get

\begin{equation*}
P_{A|\bar{D}}(A_{ij}|\bar{d}) = \frac{p_{\bar{D}|A}(\bar{d}|A_{ij})}{p_{\bar{D}}(\bar{d})}P_{A}(A_{ij}).
\end{equation*}
\begin{sloppypar} Note that $p_{\bar{D}}(\bar{d})$ is only a normalizing term; therefore, it is enough to calculate $p_{\bar{D}|A}(\bar{d}|A_{ij})P_{A}(A_{ij})$ for all $A_{ij}$. We assume a non-informative prior over the possible causal relations, $P_A(A_{ij}) = \frac{1}{9}$ for all $i,j$. Taking the dependence structure of the random variables into consideration we can write the likelihood as \end{sloppypar}
\begin{equation*}
p_{\bar{D}|A}(\bar{d}|A_{ij}) = \int p_{\bar{D}|D,\Sigma}(\bar{d}|w,s) \; dP_{D,\Sigma|A}(w,s|A_{ij}).
\end{equation*}

If we extend this with $U$, i.e., the hyperparameters of the model, we get

\begin{equation*}
\begin{aligned}
p_{\bar{D}|A}(\bar{d}|A_{ij}) & = \int p_{\bar{D}|A,U}(\bar{d}|A_{ij}, u) \; dP_{U}(u) \\ 
& = \iint p_{\bar{D}|D,\Sigma}(\bar{d}|w,s) \; dP_{D,\Sigma|A,U}(w,s|A_{ij},u) \; dP_U(u).
\end{aligned}
\end{equation*}

We know that $\bar{d}$ comes from a 4-variate normal distribution with expected value vector $w$ and covariance matrix $s$, therefore

\begin{equation*}
\begin{aligned}
p_{\bar{D}|A}(\bar{d}|A_{ij}) & = \iint \varphi_{w,s}(\bar{d}) \; dP_{D,\Sigma|A,U}(w,s|A_{ij},u) \; dP_{U}(u) \\ 
& = \iint \varphi_{\bar{d},s}(w) \; dP_{D,\Sigma|A,U}(w,s|A_{ij},u) \; dP_{U}(u) \\ 
& = \iiint \varphi_{\bar{d},s}(w) \; dP_{D|A,U}(w|A_{ij},u) \; dP_{\Sigma|A,U}(s|A_{ij},u) \; dP_{U}(u) \\ 
& = \iiint \varphi_{\bar{d},s}(w) \; dP_{D|A}(w|A_{ij}) \; dP_{\Sigma|A,U}(s|A_{ij},u) \; dP_{U}(u),
\end{aligned}%
\end{equation*}
where $\varphi_{w,s}$ denotes the multivariate normal PDF with expected value vector $w$ and covariance matrix $s$. We used the conditional independence of $D$ and $\Sigma$, the independence of $D$ and $U$, and that in the probability density function of the normal distribution the expected value and the data can be exchanged, since $(\bar{d} - w)^T s^{-1} (\bar{d} - w) = (w - \bar{d})^T s^{-1} (w - \bar{d})$.

Let us now consider $dP_{D|A}(w|A_{ij})$. Every $A_{ij}$ induces a set $S_{ij} \subset \mathbb{R}_+^4$ such that each element of $S_{ij}$ satisfies the conditions given by $A_{ij}$. For example, $S_{1,1} = \{v : v \in \mathbb{R}_{+}^{4}, v_1 < v_2 = v_3 < v_4\}$. It is easy to verify that $S_{ij}$ is a convex cone with (algebraic) dimension $q_{ij} \in \{2, 3, 4\}$, simply embedded into a $4$-dimensional Euclidean space. Let $C_{ij} \in \mathbb{R}_+^{q_{ij}}$ denote the convex cone in its original, lower dimensional space. For example $C_{1,1} = \{w : w \in \mathbb{R}_{+}^{3}, w_1 < w_2 < w_3\}$. For any $A_{ij}$ there exists a simple linear mapping $M_{ij}$ such that $M_{ij}w \in S_{ij} \; \forall w \in C_{ij}$, for example
\begin{equation*}
    M_{1,1} = \left[\begin{matrix}
    1 & 0 & 0 \\
    0 & 1 & 0 \\
    0 & 1 & 0 \\
    0 & 0 & 1
    \end{matrix}\right]
\end{equation*}
which provides the correspondence between $S_{ij}$ and $C_{ij}$.

We assume a non-informative uniform prior on the true dimensions $D$. Since the prior is conditioned on $A_{ij}$, $C_{ij}$ becomes its support, resulting in an improper prior (due to the cone being infinite). It is obvious that this is not an actual distribution, but specifies a prior with correct proportions (granting equal weight to each element of the support). Putting these together we get

\begin{equation*}
p_{\bar{D}|A}(\bar{d}|A_{ij}) = \displaystyle\iiint_{C_{ij}} \varphi_{\bar{d},s}(M_{ij}w) \; dw \; dP_{\Sigma|A,U}(s|A_{ij},u) \; dP_{U}(u).
\end{equation*}

Let us now focus on marginalizing $\Sigma$. The $4 \times 4$ covariance matrix has 10 elements in the upper-triangular part. Marginalizing each of these elements presents computational challenges and a lack of knowledge about the conditional distribution of $\Sigma$, both of which we aim to avoid. Instead, we calculate a sample covariance matrix $\hat{s}$, which is a maximum-likelihood estimate, and assume that the distribution is highly peaked at this estimate.

Note, however, that the neighborhoods of our sample points overlap, and the calculated local dimensions are not independent. 
Therefore, the covariance matrix must be calculated by taking the correlation of local dimension estimates into consideration. If the samples were independently drawn, then the covariance of the means would be given by the covariance of the local dimensions divided by the number of samples. If the samples are correlated, we must divide by the effective sample size instead. Loosely speaking, the effective sample size of an estimator of the population mean is the number with the property that our estimator has the same variance as the estimator achieved by sampling the same amount of independent individuals. In our case, two local dimension estimates are independent if their $k$-neighborhoods do not intersect. Therefore, we can (approximately) sample $\frac{n}{2k}$ independent elements from them, and that is our effective sample size.

While a specific $A_{ij}$ explicitly excludes certain dimension combinations, it cannot exclude a covariance matrix. $\hat{s}$ has a positive likelihood given any causal relationship. We assume that the distribution of $\Sigma$ is highly peaked at the maximum-likelihood estimate, to the extent that we regard it as a Dirac delta, $dP_{\Sigma|A,U}(s|A_{ij},u)=dH(s - \hat{s})$, where $H(\cdot)$ is a multivariate unit step function. Note that $\hat{s}$ is a function of $U$ as well.

After incorporating this into the integral and marginalizing we get

\begin{equation*}
p_{\bar{D}|A}(\bar{d}|A_{ij}) = \iint_{C_{ij}} \varphi_{\bar{d},\hat{s}}(M_{ij}w) \; dw \; dP_{U}(u).
\end{equation*}
In the $A_{ij} = A_{1,1} = \{X < Y = J < Z\}$ case the above would result in

\begin{equation*}
p_{\bar{D}|A}(\bar{d}|A_{1,1}) = \displaystyle \int	\left(\int_{0}^{\infty}\int_{v_1}^{\infty}\int_{v_2}^{\infty} \varphi_{\bar{d},\hat{s}}(w_1, w_2, w_2, w_3) dw_3 \, dw_2 \, dw_1\right) dP_{U}(u).
\end{equation*}

In the current implementation $U$ consists only of $k$ (the neighborhood size, which is discrete uniform), therefore integrating by $U$ practically boils down to averaging. The other $A_{i,j} \in \mathcal{A} $ cases can be treated similarly.

\section{Methods}
\subsection{Analysis workflow}\label{si:workflow}

Our proposed causality analysis method begins with two time series. First, both time series must undergo data cleaning: one needs to ensure that the data is stationary (which can be checked, for example, with an augmented Dickey-Fuller unit root test) and that observational noise is addressed (for example, with a filter). Transforming the series into stationary ones is non-trivial and typically depends on the scientific field, requiring careful attention and field expertise. Choosing the wrong transformation may result in the loss of valuable information or the introduction of artifacts, biasing the final results. For example, differentiating the time series may remove too much information, using moving averages to smooth the series introduces higher autocorrelation, or calculating current source density with non-disjoint sets of signals to remove correlation (which only indicates a first-order relationship) may introduce higher-order dependence.
Another important preprocessing phase is normalization: if the two time series have different scales (or magnitudes), the results of dimension estimation will become biased. For example, if values of $X$ are around $1$ while values of $Y$ are around $100$, then the dimension estimation of $X$ and $Y$ will be correct. However, since $J = (X,Y)$ or $J = aX + Y$, the k-NN distances will be the same in $J$ and $Y$, dominating the effects of $X$. There are several ways to normalize the data, such as $[0,1]$-scaling, z-scores, quantile normalization (or rank normalization), etc.

The preprocessed time series are embedded into an $m$-dimensional space with lag $\tau$ according to Takens' theorem. Both $m$ and $\tau$ are parameters of our model that need to be specified. Takens showed that $m = 2d + 1$ (where $d$ is the true dimension) is a sufficient choice for the embedding dimension to reconstruct dynamics. Unfortunately, the true dimension is usually unknown, and one may need to consult a field expert with sufficient knowledge about the dynamical system to provide a proper estimation of the true dimension. If this is not possible, one can use several methods proposed in the literature for the selection of $m$. The correct value of $m$ can be determined by searching for a plateau (saturation effect) in invariant quantities (such as intrinsic dimensionality) or by using the false nearest neighbor method \cite{Cao1997, Rhodes1997}.
We applied an iterative process, starting with a high embedding dimension and decreasing it, checking the estimated manifold dimensions after each decrease, and selecting the lowest possible $m$ that did not sharply reduce the estimated $d$ value. While Takens suggests that $m = 2d + 1$ is a sufficient choice, it does not mean that smaller $m < 2d+1$ cannot be acceptable; this depends on the system. In general, \cite{Casdagli1991} states that self-intersections do not alter the estimated manifold dimension if $m > d$. In finite samples, it is common for dimension estimates to increase as the embedding dimension increases; therefore, in general, it is better to select $m$ as low as possible.

The optimal value of $\tau$ can be determined from the first zero point of the autocorrelation function or from the first minima of the automutual-information function \cite{Fraser1986}. Additionally, one can optimize for $m$ and $\tau$ simultaneously by applying differential geometric arguments or using the statistical properties of dimension estimates calculated on embedded data \cite{Nichkawde2013, Tamma2016}, or determine an optimal embedding window \cite{Small2004} $(m-1)\tau$. To find the optimal $\tau$, we analyzed the partial autocorrelation function (PACF) of the time series and selected the largest significant lag (which is how one would fit an autoregressive model).
The first insignificant lag in the autocorrelation function (ACF) could be used as well, but we found PACF more effective in practice. The largest significant lag of PACF as an embedding delay results in relatively independent (moderately redundant) coordinates but still not too independent (irrelevant) to reconstruct dynamics. The ACF may diminish very slowly, resulting in a very large $\tau$, or, for example, if it is monotonic and we select the first insignificant lag, we exclude the linear and their induced non-linear relationships.
On the other hand, the largest significant lag in PACF tells us which is the largest lag where there is no direct linear relationship, but non-linear relationships induced by these linear ones remain (think of the classical partial correlation example when there is a direct linear relationship between $t_1$ and $t_2$, $t_2$ and $t_3$, and therefore a quadratic between $t_1$ and $t_3$).

Given the two $m$-dimensional embedded manifolds of the series, $X$ and $Y$, the joint $J$ and independent $Z$ manifolds are also created. For this purpose, one can either choose direct products or an additive observation function ($aX + Y$). We advise the usage of an additive observation function since, as we have shown, the two result in equivalent relationships between manifold dimensions. However, direct products produce higher-dimensional product spaces where dimension estimation becomes more unreliable. Our experience showed that $a = \sqrt{\frac{29}{31}}$ is a good choice in general because it ensures that the scale of the series does not change too much.

In some cases, the embedded manifolds may have to undergo further transformations. For example, continuous dynamics (like a Lorenz system) evolve rather slowly and in a thread-like manner. If one does not choose a sufficiently large neighborhood when estimating the local dimension at a point, the manifold may seem one-dimensional because the nearest points will all lie on the same thread. This effect is similar to sampling a slow process with too high frequency. This may be handled by down-sampling the manifold, which means re-sampling it with a lower frequency.

The (post-processed) manifolds are used to estimate local dimensions, for which we employ an estimator proposed by Farahmand, Szepesvári, and Audibert \cite{Szepesvari2007}. Their estimator has one parameter, $k$, representing the size of the neighborhood around a point in which the dimension will be estimated. Depending on the system, this estimator may be sensitive to this parameter; therefore, we try several values and aggregate the results, as described in the previous section.

As stated before, dimension estimation is challenging, and local estimates may deviate significantly from the actual dimension. To handle this, one should remove outliers from the estimates. We recommend trimming the dimension estimates by dropping those that belong to the upper or lower $\alpha$-percent tail.

At this point, all the data is available for estimating model probabilities, as described in the previous section. We repeat this process for a range of $k$ values, and assuming a uniform prior over them, we obtain the final model probabilities by practically averaging the probability of each model for each $k$.

\subsection{Empirical results}\label{si:results}

\noindent{\bf Logistic maps.}
For testing purposes, we applied our method to systems of three-coupled logistic maps with various connectivity patterns. The logistic map is a simple non-linear discrete-time dynamical system that serves as a model for various economic and ecosystems, capable of producing chaotic behavior even in one dimension. It is defined as

\begin{equation}
x_j[t+1] = r x_j[t] (1 - \sum \beta_{jl} x_l[t]),
\label{eq:logmapNonLin}
\end{equation} 
where $r=3.99$, $j, l \in \lbrace 1, 2, 3 \rbrace$ are indices for the three variables and $\beta_{jl}$ are the elements of the coupling matrix ($\underline{\underline{B}}$) according to an actual coupling scenario:
\begin{enumerate}[(a)]
    \item Direct coupling:
    \[ \underline{\underline{B}} =
    \begin{bmatrix}
    1	&	0 &		0  \\
    \beta_{21}	&	1 &		0  \\
    0	&	0 &		1  \\
    \end{bmatrix}
    \]
    
    \item Bidirectional coupling:
    \[ \underline{\underline{B}} =
    \begin{bmatrix}
    1	&	\beta_{12} &		0  \\
    \beta_{21}	&	1 &		0  \\
    0	&	0 &		1  \\
    \end{bmatrix}
    \]
    
    \item Common cause case:
    \[ \underline{\underline{B}} =
    \begin{bmatrix}
    1	&	0 &		\beta_{13}  \\
    0	&	1 &		\beta_{23}  \\
    0	&	0 &		1  \\
    \end{bmatrix}
    \]
    
    \item Independent case:
    \[ \underline{\underline{B}} =
    \begin{bmatrix}
    1	&	0 &		0  \\
    0	&	1 &		0  \\
    0	&	0 &		1  \\
    \end{bmatrix}
    \]
\end{enumerate}
where $\beta_{12}=\beta_{21}=\beta_{13}=\beta_{23}=0.5$ in the example shown in the main text.

We simulated unidirectional, bidirectionally causal, independent and hidden common cause connection patterns (Fig.\,1) with $N=10\,000$ time series length. Only the first two subsystems were observed, the activity of the third subsystem was hidden.

We preprocessed time series data by applying rank normalization.

We set model-parameters and applied our method on logistic map datasets. We set embedding delay to $\tau=1$ and we found that embedding dimension $m=4$ was big enough in all cases. The probabilities were averaged over the neighborhood sizes $k=[12,44]$, where the dimension-estimates were constant.

Our method was able to reconstruct the original coupling pattern between the observed logistic maps for all test cases; curiously, it was able to detect hidden common cause between the two observed logistic maps.

In order to test the effect of a hidden common drive to Sugihara's CCM method, two types of coupling were applied: The case of non-linear coupling corresponds to equation (\ref{eq:logmapNonLin}), however the form for linear coupling is slightly different:

\begin{equation}
x_j[t+1] = r x_j[t] (1 - x_j[t]) +\sum \beta_{jl} x_l[t]
\label{eq:logmapLin}
\end{equation}

Sugihara et al.\ \cite{Sugihara2012} stated that high values of CCM, which are independent of the data length, are a sign of a hidden common cause. High CCM for short data series is typical for highly correlated data series. Similarly, Harnack et al.\ \cite{harnack2017topological} apply the existence of the correlation without detectable direct causality to reveal a hidden common cause. Our simulation results show that the above-described properties of CCM hold only for those cases in which the hidden common cause is linearly coupled to the observed variables, implying higher linear correlation ($r=0.46$) among them (Extended Data \autoref{fig:CommonCauseTest}, blue and green lines). If the hidden common driver is non-linearly coupled to the observed time series, thus the implied correlation is low ($r=0.15$), the CCM increases with the increasing data length but still remains low (Extended Data \autoref{fig:CommonCauseTest}, black and red lines). We conclude that the data length dependence of the CCM or the presence of correlation without causality does not allow us to reliably reveal or distinguish the existence of a hidden common cause; the CCM presumably follows only the linear correlation between the two driven variables in this case. Note that our method was tested and works well on the non-linearly coupled case as well.
\medskip

\noindent{\bf Lorenz systems.}
There are several ways to couple three Lorenz-systems; we implement the coupling through their $X$ coordinate such that

\begin{equation*}
\begin{array}{l}
\dot{X_1} = \sigma (Y_1 - X_1) + c_{2 \rightarrow 1}(Y_1 - X_2) + c_{3 \rightarrow 1}(Y_1 - X_3), \\
\dot{Y_1} = X_1 (\rho - Z_1) -Y_1, \\
\dot{Z_1} = X_1 Y_1 - \beta Z_1, \\
\\
\dot{X_2} = \sigma (Y_2 - X_2) + c_{1 \rightarrow 2}(Y_2 - X_1) + c_{3 \rightarrow 2}(Y_2 - X_3), \\
\dot{Y_2} = X_2 (\rho - Z_2) - Y_2, \\
\dot{Z_2} = X_2 Y_2 - \beta Z_2, \\
\\
\dot{X_3} = \sigma (Y_3 - X_3), \\
\dot{Y_3} = X_3 (\rho - Z_3) - Y_3, \\
\dot{Z_3} = X_3 Y_3 - \beta Z_3,    
\end{array}
\end{equation*}
where $(X_i, Y_i, Z_i)$ are the three coordinates of the $i^\text{th}$ system, $\sigma, \rho$ and $\beta$ are model parameters, and $c_{i \rightarrow j}$ denotes the strength of coupling from the $i^\text{th}$ system to the $j^\text{th}$ system.

For the simulations we set $\sigma = 10$, $\rho = 28$, $\beta = 8/3$, $\Delta t = 0.01$. The initial conditions are $X_1(0) = 10$, $Y_1(0) = 15$, $Z_1(0) = 21.1$, $X_2(0) = 17$, $Y_2(0) = 12$, $Z_2(0) = 14.2$, $X_3(0) = 3$, $Y_3(0) = 8$, $Z_3(0) = 12.4$, and we take $200\,000$ samples. The coupling coefficients are chosen for the four cases as follows:

\begin{enumerate}[(a)]
    \item Direct cause: $c_{1 \rightarrow 2} = 3.5$, $c_{2 \rightarrow 1} = c_{3 \rightarrow 1} = c_{3 \rightarrow 2} = 0$,
    \item Bidirectional cause: $c_{1 \rightarrow 2} = c_{2 \rightarrow 1} = 3.5$, $c_{3 \rightarrow 1} = c_{3 \rightarrow 2} = 0$,
    \item Hidden common cause: $c_{1 \rightarrow 2} = c_{2 \rightarrow 1} = 0$, $c_{3 \rightarrow 1} = c_{3 \rightarrow 2} = 3.5$,
    \item Independence: $c_{1 \rightarrow 2} = c_{2 \rightarrow 1} = c_{3 \rightarrow 1} = c_{3 \rightarrow 2} = 0$.
\end{enumerate}

Our method requires the specification of the following parameters: the embedding dimension $m$, time-delay $\tau$ and a set of integers for different $k$-s for the $k$-NN search. Takens found that if a manifold has dimension $D$ then $2D + 1$ is a good candidate for the embedding dimension. The dimension of the joint manifold is at most $6$ (the number of state variables). We try to minimize the embedding dimension: we start with $2D+1$ and decrease it as long as the dimension estimates are not forced down. This way we found the following embedding dimensions for the four causal cases: (a) 7, (b) 7, (c) 7 and (d) 5.
We select $\tau = 25$ based on the partial autocorrelation function (PACF) of the time series and $k \in [10, 38]$. Due to the threadlike phase space we also apply downsampling on the reconstructed manifold by keeping only every fourth point. We drop the outlying dimension estimates that belong to the lower or upper 5\% tail.
\medskip

\noindent{\bf Hindmarsh-Rose systems.}
We use three electrically coupled Hindmarsh-Rose neurons where coupling is achieved through their membrane potential ($X$), described by the differential equations

\begin{equation*}
\def\arraystretch{1.5}
\begin{array}{l}
\dot{X_1} = Y_1 - a X_1^3 + b X_1^2 - Z_1 + I_1 + c_{2 \rightarrow 1}(X_2 - X_1), \\
\dot{Y_1} = c - d X_1^2 - Y_1, \\
\dot{Z_1} = r_1\left(s\left(X_1 - \chi\right) - Z_1\right), \\
\\
\dot{X_2} = Y_2 - a X_2^3 + b X_2^2 - Z_2 + I_2 + c_{1 \rightarrow 2}(X_1 - X_2), \\
\dot{Y_2} = c - d X_2^2 - Y_2, \\
\dot{Z_2} = r_2\left(s\left(X_2 - \chi\right) - Z_2\right), \\
\\
\dot{X_3} = Y_3 - a X_3^3 + b X_3^2 - Z_3 + I_3 + c_{1 \rightarrow 3}(X_1 - X_3), \\
\dot{Y_3} = c - d X_3^2 - Y_3, \\
\dot{Z_3} = r_3\left(s\left(X_3 - \chi\right) - Z_3\right),
\end{array}
\end{equation*}
where $a, b, c, d, \chi, r_j, s,$ and $I_j$ are model parameters and $c_{i \rightarrow j}$ denotes the strength of coupling from the $i^\text{th}$ neuron to the $j^\text{th}$.
Note that the first neuron is coupled into the remaining two, while the second neuron is coupled into the first one, meaning that the second and third neurons are not coupled directly, only through a hidden common driver. Common choices for the parameters are $a = 1$, $b = 3$, $c = 1$, $d = 5$, $s = 4$, $\chi = -1.6$. It is also important that $r_i$ should be in the magnitude of $10^{-3}$ and $I_i \in [-10, 10]$; we choose $r_1 = 0.001$, $r_2 = r_3 = 0.004$, $I_1 = 2.0$, $I_2 = 2.7$ and $I_3 = 2.4$. The initial conditions are $X_1(0) = 0$, $Y_1(0) = 0$, $Z_1(0) = 0$, $X_2(0) = -0.3$, $Y_2(0) = -0.3$, $Z_2(0) = -0.3$, $X_3(0) = 0.3$, $Y_3(0) = 0.3$ and $Z_3(0) = 0.3$. We take $70\,000$ samples with time steps $\Delta t = 0.1$ and drop the first $10\,000$ where the system is still in a burn-in period. The coupling coefficients are chosen for the four cases as follows:

\begin{enumerate}[(a)]
    \item Direct cause: $c_{1 \rightarrow 2} = 0.615$, $c_{2 \rightarrow 1} = c_{1 \rightarrow 3} = 0$,
    \item Bidirectional cause: $c_{1 \rightarrow 2} = c_{2 \rightarrow 1} = 0.615$, $c_{1 \rightarrow 3} = 0$,
    \item Hidden common cause: $c_{1 \rightarrow 2} = c_{1 \rightarrow 3} = 0.615$, $c_{2 \rightarrow 1} = 0$,
    \item Independence: $c_{1 \rightarrow 2} = c_{2 \rightarrow 1} = 	c_{1 \rightarrow 3} = 0$.
\end{enumerate}
In each case we use the observations of the first two neurons, except for the hidden common cause case where we use the second and third neurons.

We parametrize the method based on the ideas pointed out in the Lorenz case. We set $k \in [10, 98]$, select $\tau = 5$ based on the PACF of the time-series, and the embedding dimension to be (a) 4, (b) 5, (c) 5, and (d) 3. We drop the lower and upper 5\% of the dimension estimates.
\medskip

\noindent{\bf Changes of inter-hemispheric connectivity during photo-stimulation.}

\noindent{\textit{\textbf{EEG recordings and photo-stimulation.}}} Monopolar recordings were taken with  standard electrode arrangement of the 10--20 system by a Brain Vision LLC (Morrisville, NC 27560, USA) EEG device with $500$ Hz sampling frequency from $n=87$ patients and the data was stored by the Vision Recorder software with a $0.1$--$1000\,\mathrm{Hz}$ bandpass filter enabled (no notch filter).
Flashing light stimulation were carried out 
in a $0.1$--$29\,\mathrm{Hz}$ range.

\noindent{\textit{\textbf{Preprocessing.}}}
We computed 2D current source density (CSD) of the EEG signal at the six selected electrode positions ($P3, P4, C3, C4, F3, F4$) to filter out linear mixing between channels \cite{Trongnetrpunya2016}. CSD is roughly proportional to the negative second derivative of the electric potential, so we can represent CSD by the  discrete Laplace of the original EEG signal up to a constant factor by the simplified formula:
	
\begin{equation}
CSD(x) \propto \sum_{i=1}^N n_{i}^{(x)} - N  x
\label{equation_csd} 
\end{equation}
where $x$ is the EEG signal measured at an electrode position, $n_{i}^{(x)}$ is the signal measured at a neighboring contact point, $N$ is the number of neighbors and $CSD(x)$ is the CSD at $x$. 
We have to mention that Eq.\ \eqref{equation_csd}  overlooks the non-uniform spatial distances of neighbors.
The following neighborhoods were used for the EEG channels:
\begin{itemize}[--]
    \item P3: \{C3, T5, O1\},
    \item P4: \{C4, T6, O2\},
    \item C3: \{T3, F3, P3\},
    \item C4: \{T4, F4, P4\},
    \item F3: \{Fp1, F7, C3\},
    \item F4: \{Fp2, F8, C4\}.
\end{itemize}

Medial electrode points (common neighbors) were omitted from the neighborhoods to avoid the artificial introduction of common cause relations.

Moreover $1$--$30\,\mathrm{Hz}$ band-pass filtering ($4^\text{th}$ order Butterworth filter) was applied to the CSD signal. 
From stimulation periods, we used only the  $21$, $24.5$, $26$, $29\,\mathrm{Hz}$) stimulation segments in the analysis, which induced the greatest evoked EEG activity on recording channels.
The resulting time series segments were $46\,760$--$84\,907$ samples long for stimulations.
Artifactless segments with matching sample sizes were taken as paired control from stimulation-free periods.
	
\noindent{\textit{\textbf{Model parameters.}}}
We determined model parameters embedding delay and embedding dimension empirically, as pointed out, e.g., in the Lorenz case.
Embedding delay was set according to the first insignificant value of PACF on the first dataset, and also a rough screening was made where we applied the method with unrealistically big delays and compared the results with smaller delay values. We assumed that choosing unrealistically big $\tau$ yields erroneous results, and a smaller $\tau$ which yields different result may be a valid choice of the embedding parameter. Both approaches showed that $\tau=5$ was a reasonable value for embedding delay.

Embedding dimension was set according to a screening from $D=5$ to $D=14$, with $D=7$ as the smallest dimension required for the embedding. The joint manifold ($J'$) was created with $a=1$.
\medskip

\noindent{\bf Causal connections during epileptic seizure.}

\noindent{\textit{\textbf{Electrode implantation and recording.}}}
Patients had medically intractable seizures and were referred for epilepsy surgical evaluation. 
After detailed non invasive EEG, video-EEG and various imaging studies, including MRI and PET, they underwent phase 2 invasive video-EEG examination. 
Intracranial subdural electrodes (grids and strips) were utilized to localize the epileptogenic zone.
In the present study, we used the original data obtained along the clinical investigation without any influence by the study.
All the patients provided informed consent for clinical investigation and surgeries along institutional review board guidelines, in accordance with the Declaration of Helsinki.

The patient underwent subdural strip and grid electrode implantation (AD-TECH Medical Instrument Corp., Racine, WI).
Subdural electrodes ($10\,\mathrm{mm}$ intercontact spacing) were implanted with the aid of neuronavigation and fluoroscopy to maximize accuracy \cite{Eross2009}, via craniotomy with targets defined by clinical grounds.
Video-EEG monitoring was performed using Brain-Quick System Evolution (Micromed, Mogliano Veneto, Italy).
All signals were recorded with reference to the skull or mastoid at $1024\,\mathrm{Hz}$ sampling rate.
To identify the electrode locations, the patient received an anatomical $T1$-weighted MPRAGE MRI before electrode implantation as well as a full head CT scan ($1\,\mathrm{mm}$ slices) after electrode implantation. 
Electrode locations were identified on the postimplantation CT scan using the software BioImage Suite \cite{Duncan2004} (\url{www.bioimagesuite.org}).
These locations were then mapped to the pre-implant MRI via an affine transformation derived from coregistering the pre-implant and post-implant MRIs and post-implant MRI and CT scans using FLIRT \cite{Jenkinson2001} and the skull-stripping BET 2 algorithm \cite{Smith2002}, both part of the Oxford Centre for Functional MRI of the Brain (FMRIB) software library (FSL: \url{www.fmrib.ox.ac.uk/fsl}).
The reconstructed pial surface was computed from the pre-implantation MRI using FreeSurfer \cite{Dale1999} (\href{https://surfer.nmr.mgh.harvard.edu}{\nolinkurl{surfer.nmr.mgh.harvard.edu}}) and the electrode coordinates projected to the pial surface \cite{Dykstra2012} to correct for possible brain shift caused by electrode implantation and surgery.
Intraoperative photographs and ESM were used to corroborate this registration method.
This pial surface projection method has been shown to produce results that are compatible with the electrode locations in intraoperative photographs (median disagreement of approx. $3\,\mathrm{mm}$, \cite{Dykstra2012}). 
	
\noindent{\textit{\textbf{Preprocessing.}}}
Extracellular local field potential recordings were preprocessed: CSD  was computed (at $Fl1, Fl2, iP\text{ and }Fb$), $1$--$30\,\mathrm{Hz}$ Fourier filtering ($4^\text{th}$ order Butterworth filter) and subsequent rank normalization was carried out on the recordings. 

\noindent{\textit{\textbf{Model parameters.}}}
Embedding delay was set to $\tau=11$ samples (according to PACF) and embedding dimension was set to $D=7$. The joint manifold ($J'$) was created with $a=1$.

\noindent{\textit{\textbf{Results on multiple seizures and multiple control sections.}}}
We examined the causal connectivity with our method between the fronto-basal ($Fb$), frontal($Fl1$), frontolateral ($Fl2$) and inferior-parietal ($iP$) areas during seizure and interictal control conditions.

Our results showed that during epileptic seizures $iP$ was found to be a dominant driver in the network and also the existence of non-observed drivers could be inferred  (Extended Data \autoref{fig:fotostim_boxplot}). Connectivities calculated on $18$ seizures showed various connection patterns: in $6$ cases $iP$ was a driver in the network, in $1$ case $Fb$ was a driving $Fl1$, in $1$ case $Fb$ and $Fl2$ were bidirectionally coupled and the remaining $10$ cases the nodes were driven by unobserved common cause.

During interictal control conditions $Fb$ was found to be the dominant node (Extended Data \autoref{fig:memo_seizure_microfilm}). Connectivities were calculated on $16$ control periods showed that $Fb$ was playing a driver role in the network. Additionally in $5$ cases $iP$ also drove the other nodes, in $2$ cases $Fl1$ was a driver of $Fl2$, and in many cases bidirectional connections were present between the nodes.

\clearpage
\section{Extended data}\label{si:extended}
\renewcommand{\figurename}{Extended Data Figure}
\setcounter{figure}{0}

\begin{figure}[H]
	\centering
	\includegraphics[scale=0.45]{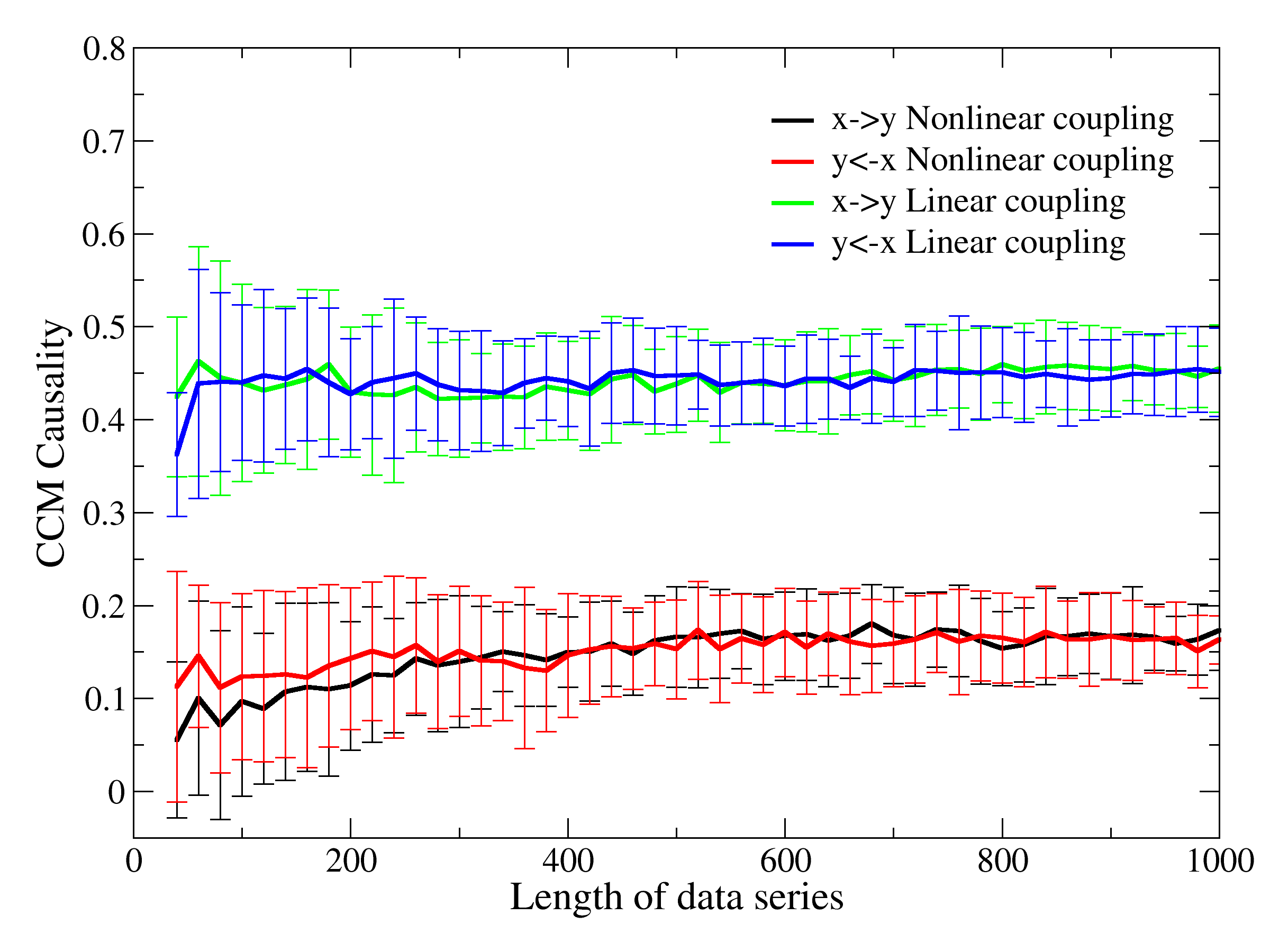}
	\caption{\textit{CCM causality between two observed logistic maps driven by a hidden common cause} The CCM causality measure as a function of the length of the data series are shown. The observed time series (x and y) were not directly coupled, but they were driven by a third logistic map as a common cause. The causality measure shows high values in both directions independently from the data length only if the hidden common cause was linearly coupled to the x and y, thus they were correlated, with $r=0.46$ linear correlation coefficient (blue and green lines). In the case of nonlinear coupling, the CCM measure increases with the increasing data length but converges to lower values, which again corresponds to the linear correlation between x and y $r=0.15$ (black and red lines). Thus, from the data length dependence of CCM, the existence of a hidden common cause can not be inferred reliably.} 
	\label{fig:CommonCauseTest}
\end{figure}

\begin{figure}
	\centering
	\includegraphics[scale=0.45]{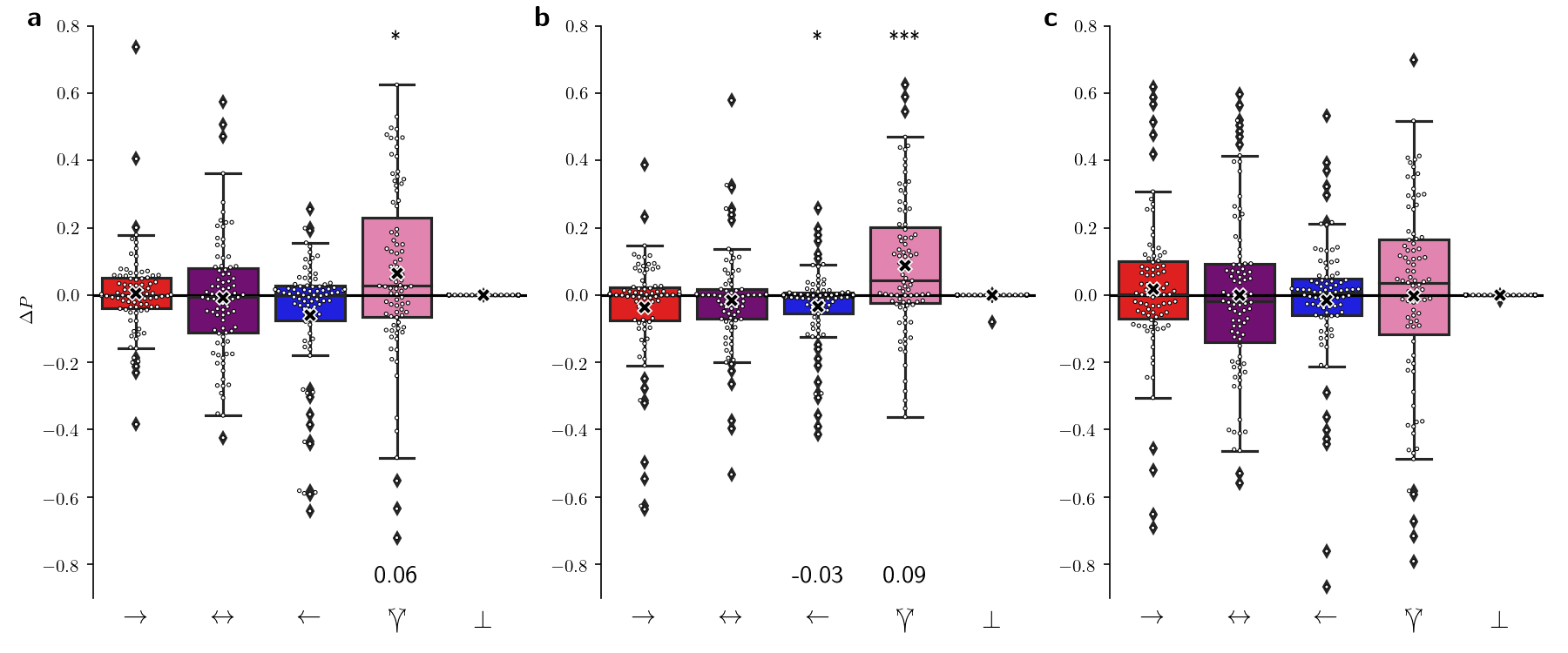}
	\caption{\textit{Differences in probability for causal connection types: stimulus minus control.} The difference in probabilities of causal types are depicted for $P3$--$P4$ (\textbf{a}), $C3$--$C4$ (\textbf{b}), $F3$--$F4$ (\textbf{c}) channel pairs ($n=87$). Boxes denote inter-quartile (IQ) ranges, whiskers are at $1.5$ IQ range from the quartiles and the points outside the whiskers are marked as outliers by diamond symbols. Sample median and mean are indicated by horizontal line and \textbf{x} marker respectively. Asterisks mark significant differences of sample mean (*: $p<0.05$, ***: $p<0.001$). \textbf{a} The probability of common cause significantly increased on the $P3$--$P4$ channel pair ($p=0.026$). \textbf{b} On $C3$--$C4$ channel pair, the probability of common cause shows significant increase ($p=0.0006$) and the directed causal link from $C4$ to $C3$ is significantly decreased ($p=0.011$). \textbf{c} No significant changes in probabilities are detected on the $F3$--$F4$ channel pair.}
	\label{fig:fotostim_boxplot}  
\end{figure}

\begin{figure}
	\centering
	\includegraphics[scale=0.45]{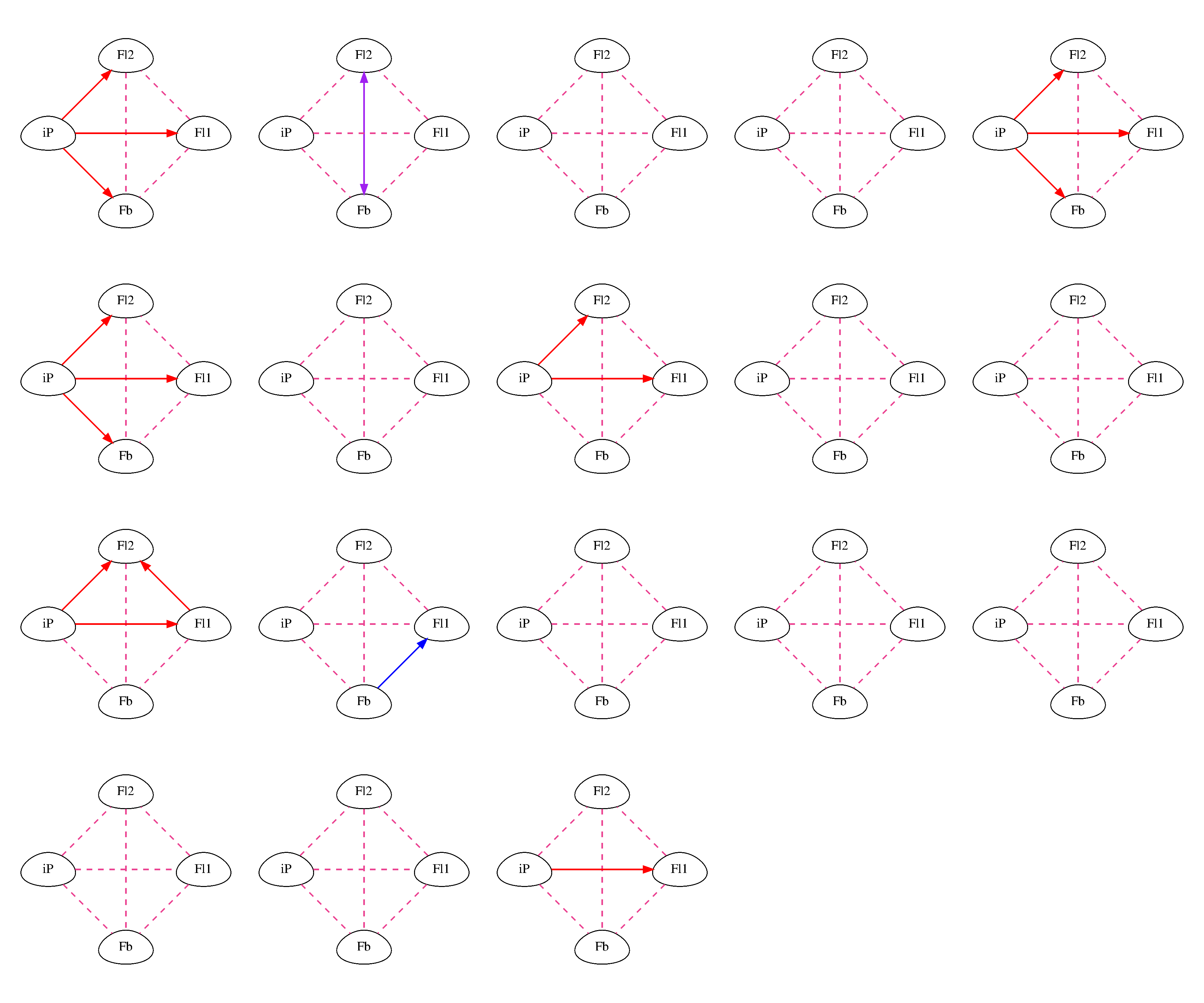}
	\caption{\textit{Inferred causal connectivity during seizures.} Two main types of MAP connectivity pattern were observed: in $10$ cases the existence of a hidden common cause were inferred, while in $6$ cases the infero-parietal ($iP$) area was found as a driver.}
	\label{fig:memo_seizure_microfilm}  
\end{figure}

\begin{figure}
	\centering
	\includegraphics[scale=0.45]{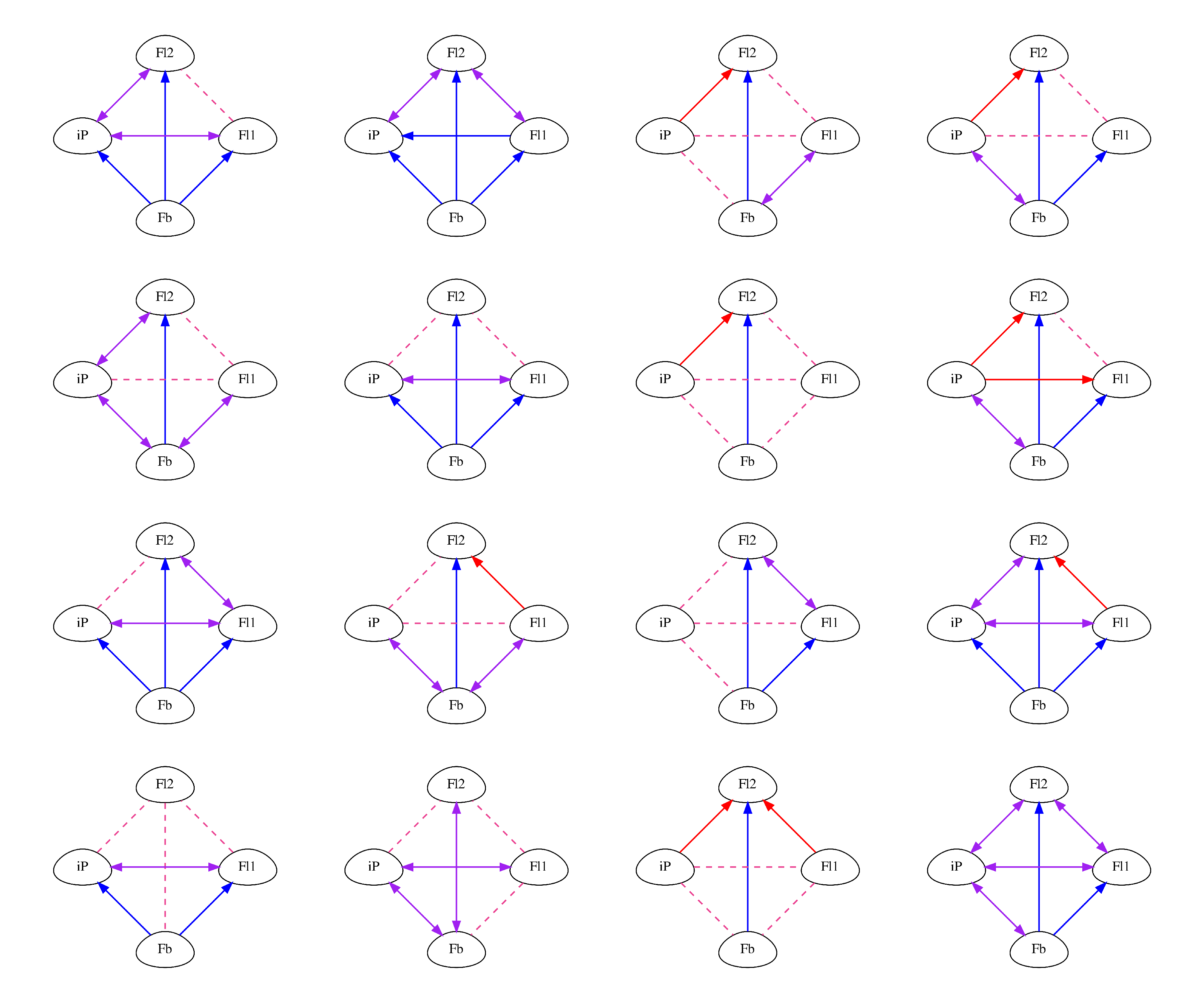}
	\caption{\textit{Causal relations for interictal periods.} MAP connectivity of the $16$ interictal periods show fronto-basal ($Fb$) area as a main driver.}
	\label{fig:memo_control_microfilm}  
\end{figure}

\clearpage
% Bibliography styles
% \bibliographystyle{abbrvnat}
% \bibliographystyle{unsrtnat} % puts the references in citation order

% NATURESTYLE 1.
%\bibliographystyle{naturemag.bst}
%\bibliographystyle{ksfh_nat}
%\bibliography{mybibs}%this was active

% NATURESTYLE 2.
%\printbibliography

% Elsevier
\bibliographystyle{elsarticle-num}

% Produce bibliography
\bibliography{bibs_main}

\vspace{1cm}

\end{document}